\def\ls{\mspace{4mu}} 
\def\wk{\boldsymbol{w}_k}
\def\qn{\boldsymbol{q}_m[n]}
\def\qbs{\boldsymbol{q}_{B}[n]}
\def\hr{H_{R}}
\def\hb{H_{B}}
\def\pm{{p}_m[n]}
\def\pb{{p}_{B}[n]}
\def\rhoo{\rho_{0,m}}
\def\rhobs{\rho_{0,B}}
\def\A{\boldsymbol{X}}
\def\R{\boldsymbol{R}}
\def\Q{\boldsymbol{Q}}
\def\akm{a_{m,k}[n]}
\def\bbm{\beta_{B,m}[n]}
\def\Rk{\bar{R}_k}
\def\Rkn{{R}_k[n]}
\def\Q{\boldsymbol{Q}}
\def\A{\boldsymbol{X}}
\def\P{\boldsymbol{P}}
\def\snrr{\Gamma_{m,k}}
\def\tkm{t_{m,k}}
\def\dkm{d_{m,k}}
\def\wkm{w_{m,k}[n]}
\def\zmn{z_{m}[n]}
\newcommand{\removelatexerror}{\let\@latex@error\@gobble}
\newtheorem{lemma}{Lemma}
\newtheorem{prop}{Proposition}
\begin{document}
\title{Two-Hop Multi-UAV Relay Network Optimization with Directional Antennas}

\author{Carles Diaz Vilor,~\IEEEmembership{Student Member,~IEEE,}
        and~Hamid~Jafarkhani,~\IEEEmembership{Fellow,~IEEE}
\thanks{The authors are with the Center for Pervasive Communications and
Computing, Department of Electrical Engineering and Computer Science,
University of California, Irvine, CA, 92697 USA (email: $\{$cdiazvil, hamidj$\}$@uci.edu). This work was supported in part by the NSF Award CCF-1815339.}
}

\maketitle

\begin{abstract}
In this paper, we consider the multi-UAV deployment problem for a two-hop relaying system. For a better network performance, UAVs carry directional antennas that are modeled by a realistic radiation pattern. The goal is to maximize the minimum user rates, and therefore achieve fairness in the network. We propose an iterative algorithm to optimize the TDMA scheduling in both hops, UAV trajectories, antenna beamwidths, and transmit power of the base station and relays.  Simulation results show the throughput improvement as a result of optimizing the directional antenna radiation patterns. In addition, we derive the optimal power allocation, which combined with the beamwidth optimization yields to a much better performance.
\end{abstract}

\begin{IEEEkeywords}
UAV, relay networks, directional antenna,  trajectory, optimization, beamwidth, power allocation
\end{IEEEkeywords}

\IEEEpeerreviewmaketitle

\section{Introduction}
Many fields are taking substantial benefits from Unmanned Aerial Vehicles (UAVs) because of their advantages, for example low production cost, easy deployment, control and maneuverability. In particular, using UAVs in wireless communication systems has recently attracted a lot of attention \cite{7995044,Zeng2018TrajectoryMulticasting,7918510,Wu2018JointNetworks,Zhang2018TrajectoryCommunications,8068199,Cheng2018UAVCells,Koyuncu2017DeploymentApproach,7807176,Zeng2017Energy-EfficientOptimization, Zeng2019EnergyUAV}. Some applications of UAV-enabled communications may include the use of UAVs as mobile base stations \cite{7918510, Zeng2018TrajectoryMulticasting,Wu2018JointNetworks},  as well as mobile relays \cite{Zhang2018TrajectoryCommunications,8068199} or for data offloading purposes \cite{Cheng2018UAVCells}. However, to practically include UAVs in the existing ground network, there are still some challenges that need to be addressed. For example, exploiting the UAV mobility requires resource allocation and the design of the trajectory. \par 
While trajectory planning is not an issue in  conventional ground access points,  UAV networks are constrained in terms of flying altitudes, inter-UAV safety distances, non-flying zones, maximum velocity or the on-board energy. For example, UAVs can legally fly only in a determined range of altitudes. Furthermore, to tackle the energy constraint and study energy-efficient deployments, the UAV energy propulsion model is derived for fixed-wing and rotatory-wing UAVs in \cite{Zeng2017Energy-EfficientOptimization} and \cite{Zeng2019EnergyUAV}, respectively. \par 

In addition, wireless networks have experienced a massive densification, mainly to improve the spectral efficiency and to provide new services like 5G and Internet of Things (IoT). One of the premises of 5G is the use of higher frequency bands, such as the millimeter wave band \cite{Zhang2019ANetworks}. Deployments in such frequencies allow more devices per cell at a price of more interference. Although there are many techniques to mitigate interference in ground multi-user scenarios, e.g. look at \cite{Kazemitabar2008MultiuserAntennas,Kazemitabar2009MultiuserAntennas,Koyuncu2012DistributedFeedback} and the references therein, less work has been done in UAV networks in terms of power management and interference coordination, see for example \cite{Wu2018JointNetworks,8811738,8961096}.  While one appealing alternative is the use of directional antennas to avoid the interference to begin with, it is not common to consider such models in UAV networks. In fact, probably the most common assumption is considering isotropic radiation patterns at UAVs \cite{Wu2018JointNetworks,Zeng2018TrajectoryMulticasting,Zhang2018TrajectoryCommunications,Cheng2018UAVCells,8068199,8811738,8961096}. However, in this work, we consider  UAVs that carry directional antennas with the aim of improving the wireless links. \par

Focusing on the works utilizing directional antennas, with the exception of our group's recent papers \cite{Globec,9086619}, a constant-gain directional model has been widely adopted \cite{8379427,7486987,8764580}. Particularly, in our recent conference paper \cite{Globec}, we investigate the optimal 3-D  trajectory optimization problem in a one-hop UAV-enabled down-link scenario while \cite{9086619} investigates the optimal 3-D static  UAV location to obtain a power-efficient deployment.  In contrast to the constant-gain approach, \cite{ConstantineABalanis2016AntennaDesign} suggests a more realistic model, in which the antenna radiation pattern is no longer constant within the dominant direction. In fact, such a model considers a continuous angle-dependent gain given by the cosine-powers of the angle between the source and destination. On the other hand, modeling the radiation pattern as in \cite{ConstantineABalanis2016AntennaDesign} results in complex non-convex optimization problems. However, the networks under consideration present a  considerable advantage, being the dominance of the Line-of-Sight (LoS) channel component under certain conditions \cite{Itu-rGuidelinesServices,LOSLTE,Khuwaja2018ACommunications,8337920,7936620}, that makes the analysis simpler. \par

Finally, an especially interesting use case for UAVs in wireless communications is using them as relays. While static relay networks have been extensively studied in the literature \cite{1056084,1435648,4585342,5437440}, the study of the dynamic relaying is less common. As an example, \cite{Zhang2018TrajectoryCommunications} studies a multi-hop relaying problem with a fixed base station and unitary gain radiation patterns. As another example, \cite{8068199} investigates a two-hop single-UAV deployment with a fixed base station. Both examples  ignore directional antennas. This motivates us to tackle the problem of dynamic relaying using UAVs carrying directional antennas.  Specifically, in this paper, we study a general two-hop multi-UAV dynamic relaying network with directional antennas. More precisely, we formulate the minimum user rate maximization problem for the case where a set of Relay-UAVs (R-UAVs)  are deployed to assist a Mobile Base Station (MBS) to reach the ground users (GUs) in a down-link scenario. 

To deploy such a network, as presented in Figs. \ref{fig:scen1} and  \ref{fig:scen2}, we are interested in the optimization of: (i) resource scheduling in both hops, (ii) UAV trajectories, (iii) antenna beamwidths, and  (iv) transmit power of the MBS and R-UAVs. While each of these optimization problems has been considered for other scenarios in the literature, to the best of our knowledge, there is no attempt in optimizing the combination of them using directional antennas in a two-hop relay network. 

A common formulation of the resource allocation optimization is the classical Time Division Multiple Access (TDMA) scheduling problem for both hops. The UAV trajectory optimization is the subject of many studies in the literature. Sequential Convex Programming (SCP) \cite{Diehl2010RecentEngineering} is one of the most common approaches \cite{Wu2018JointNetworks,Zhang2018TrajectoryCommunications,Cheng2018UAVCells,Zeng2017Energy-EfficientOptimization,Zeng2019EnergyUAV,8811738,8961096,Globec}. 
Other techniques may include graph theory \cite{Zeng2018TrajectoryMulticasting,878915} or artificial intelligence through reinforcement learning methods \cite{8654727,8736350}. 
The beamwidth optimization problem is, in general, non-convex and not well-studied in the literature. We provide the optimal beamwidth  solution and numerical results for some convex cases. As shown through simulation, the insight learned from the convex cases is valid in the non-convex cases as well. Finally,  power optimization for both the MBS and R-UAVs is a key element of interference mitigation in multi-cell networks.
The main energy consumption components of a transceiver node are communication energy and computation energy \cite{HYHJMM}. The experimental measurements show that, in many applications, the computation energy is negligible compared to the communication energy \cite{GMMA}.
Therefore, we study the power allocation problem only considering the communication power and derive analytical expressions for the power allocation of both the MBS and R-UAVs. Hence, the contributions of the paper can be summarized as:
\begin{itemize}
    \item We introduce directional antennas to the general framework of two-hop multi-UAV relaying systems, where UAVs act as both mobile relays and the mobile base station. 
    \item We formulate and solve the maximize minimum GU throughput problem. 
    \item We investigate the UAV's beamwidth optimization problem to improve the throughput. 
    \item We provide  a closed-form analytical solution to the power allocation problem for a given TDMA scheduling, UAV trajectories and beamwidths.
\end{itemize} \par

The remainder of the paper is organized as follows. Section \ref{Sec:SMod} presents the system model of the two-hop multi-UAV relaying scenario. In Section \ref{Sec:Form}, we formulate the maximize minimum GU rate problem. In Section \ref{Sec:Sol}, we divide the original problem into four components and propose an iterative method to solve it. Numerical results are presented and discussed in Section \ref{Sec:Res}. Finally, we provide concluding remarks in Section \ref{Sec:Concl}.

{\em Notation:}  
We write real numbers in $\mathbb{R}$ in small letters. Row vectors are bold. The Euclidean norm of  vector $\boldsymbol{v}$ is given by $||\boldsymbol{v}|| = \sqrt{\sum_n v_n^2}$ and sets are represented in calligraphic letters.

\begin{figure}[!htb]
    \minipage{0.41\textwidth}
    \centering
      \includegraphics[scale=0.40]{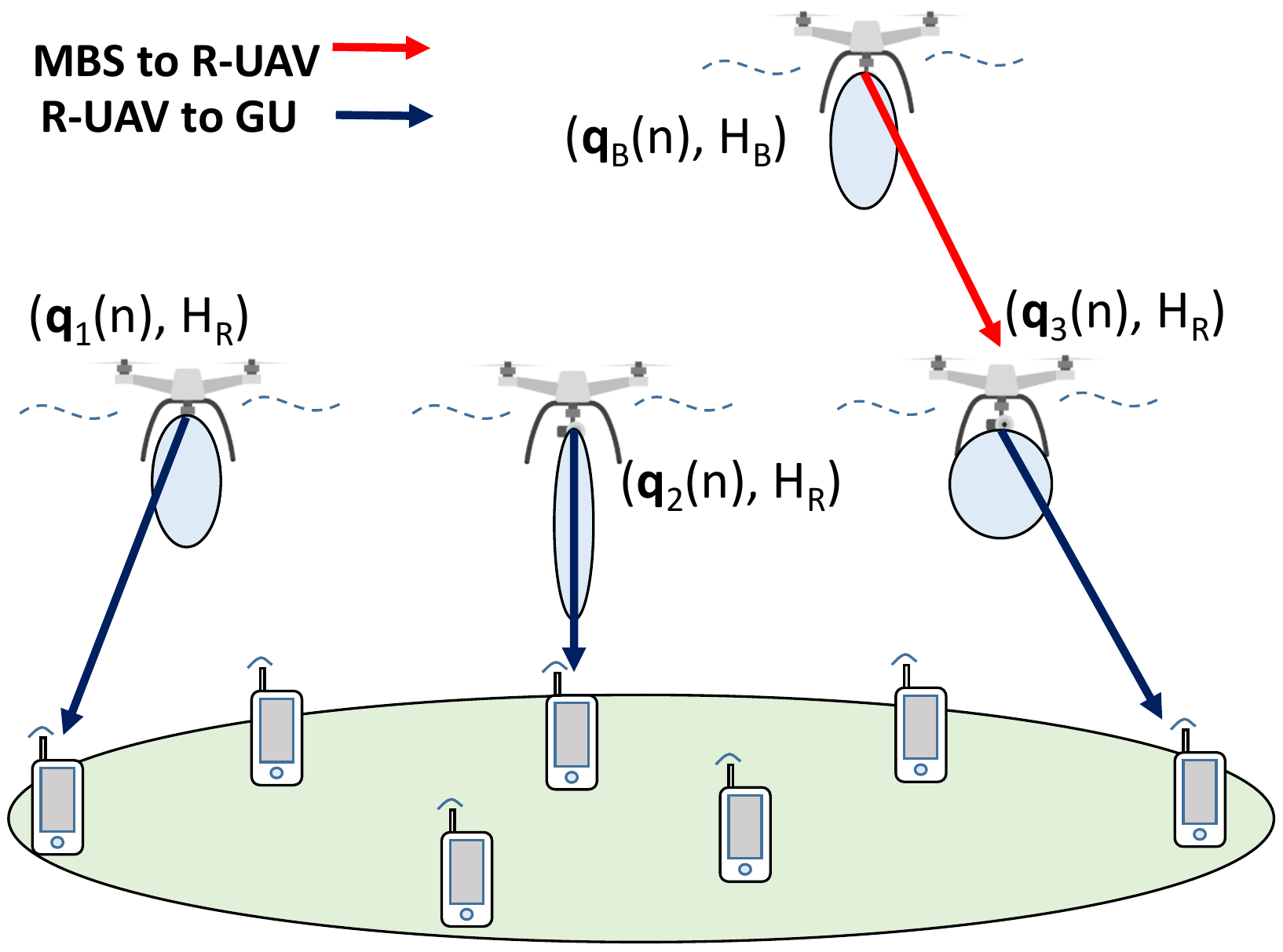}
      \caption{Network structure at snapshot $n$ with one MBS and three R-UAVs.}\label{fig:scen1}
    \endminipage\hfill
    \minipage{0.41\textwidth}%
    \centering
      \includegraphics[scale=0.40]{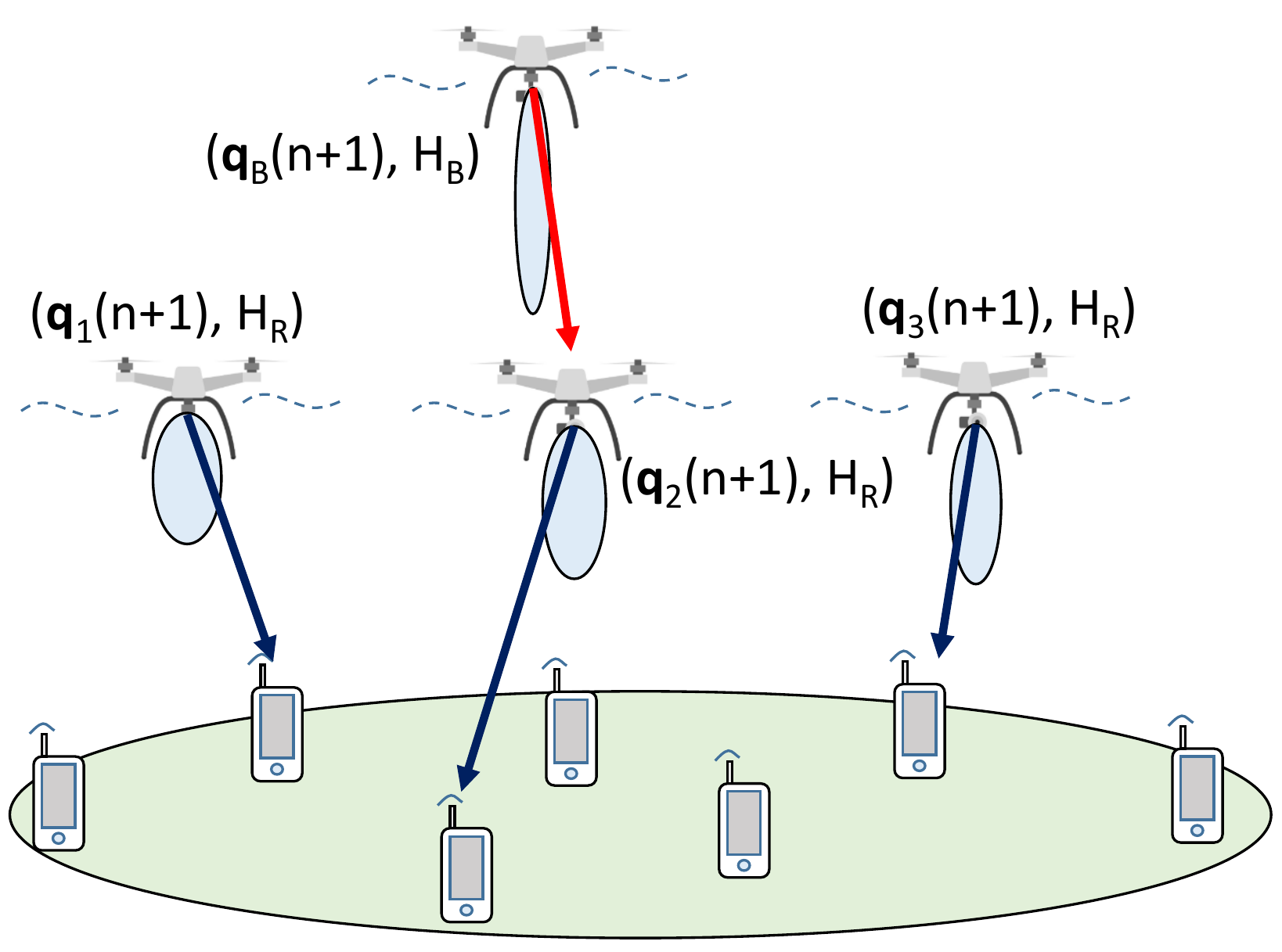}
      \caption{Network structure at snapshot $n+1$ with one MBS and three R-UAVs.}\label{fig:scen2}
    \endminipage
\end{figure}

\section{System Model}\label{Sec:SMod}
As shown in Figs. \ref{fig:scen1} and \ref{fig:scen2}, the networks under consideration feature one source node, named MBS, and $M$ R-UAV relays, enumerated by  $\mathcal{M} = \{1,2,\dots,M \}$,  re-transmitting the information to $K$ GUs, represented by $\mathcal{K} = \{1,2,\dots,K \}$. An example of such a scenario is during a disaster, \cite{7807176}, when the wireless communication infrastructure is down and GUs communicate through the ad-hoc network of UAVs.  In this work, we present and derive relations between the UAV movement, power consumption, channel model and other network features needed to fully deploy such structure with the aim of maximizing the minimum GU rate. \par 
Relay networks have been the subject of research for many years \cite{1056084,1435648,4585342,5437440}, and therefore their advantages are well-known. Probably, the main benefit of relay networks is that when the channel quality between source and destination is not good, we can still find a multi-hop relay path overcoming such channel difficulties  \cite{4373409,4801494}. However, not much work has been done for the case where relays are allowed to move, as in  the case of R-UAVs. In this case, one MBS is used as the source and $M$ R-UAVs are employed to reach the GUs in a two-hop scenario. For ease of exposition, we  refer to the first hop as the links between MBS to R-UAVs and the second hop as the ones from  R-UAVs to GUs. Once the R-UAVs receive data from the MBS, there are four intermediate steps before re-transmitting it: down-conversion, decoding, encoding and up-conversion, which will result in delays, since there is limited hardware and software on board of the UAVs. Therefore, we also take into account a variable signal processing delay. \par

Furthermore, in many existing work in the literature, the location of the BS is assumed to be fixed and is designed to fulfill some requirements, e.g. coverage or data-rate, look at \cite{8302930}  and the references therein. However, such a paradigm can be broken with the use of UAVs, as discussed in the previous section. More particularly and without loss of generality, we assume the BS can be mobile (MBS) for the rest of this work. We denote the time-varying MBS coordinates by a 3-D vector $(\boldsymbol{q_{B}}(t), H_{B})$ where $t$ is the time index between $0$ and the  flying/mission time $T$, sub-index $B$ stands for Base, $\boldsymbol{q_{B}}(t) \in \mathbb{R}^2$ is the ground projection and $H_{B} \in \mathbb{R}$ is the fixed height. Similarly, we can write the set of R-UAV positions as: $(\boldsymbol{q_{m}}(t), H_{R}) \ls \ls m \in \mathcal{M}$. 
Finally, we assume $H_{B} > H_{R}$ and without loss of generality, GUs are located at static positions $\wk{} \in \mathbb{R}^2\ls , \ls \ls k \in \mathcal{K}$. \par

To manage the difficulty of dealing with continuous-time variables, we discretize the time index, $t$, by dividing the time horizon $T$ into $N$ equal time slots, such that 
$T \ls = \ls \delta N$. We also introduce the discrete-time index $n=\frac{t}{\delta}\ls , \ls  \ls n = 1,2,\dots,N$. Therefore, the UAV trajectories of the MBS and the R-UAVs can be expressed by $ (\qbs{}, \hb{})$ and $(\qn{}, \hr$), respectively. 
We refer to $V_{R}$ and $V_{B}$ as the maximum horizontal velocities of the R-UAVs and the MBS, respectively. Then, the first UAV-mobility constraints, referring to the maximum velocity between two generic snapshots, are:
\begin{align}\label{ct:velBS}
    ||\boldsymbol{q}_{B}[n+1] - \qbs{} ||^2 \leq (V_B \delta)^2 \ls \ls \forall n, \mspace{26mu} ||\boldsymbol{q}_{m}[n+1] - \qn{} ||^2 \leq (V_R \delta)^2 \ls \ls \forall n, \ls \forall m.
\end{align}
In addition, we force UAVs to have the same initial and final positions, which in practice means that GUs can be served periodically every $N-1$ time slots:
\begin{align}\label{ct:InitBS}
    \boldsymbol{q}_{B}[1] = \boldsymbol{q}_{B}[N],  \ls \ls \ls \ls \ls \ls \ls \ls \ls \ls \ls \ls \boldsymbol{q}_{m}[1] = \boldsymbol{q}_{m}[N]\ls \ls \forall m.
\end{align}
Finally, to avoid collision between R-UAVs, the following constraint must be satisfied:
\begin{align}\label{ct:Collision}
    || \qn \ls - \ls \boldsymbol{q}_j[n||^2 \geq d^2_{min},  & \ls \ls \ls \forall \ls n,m, j \neq m,
\end{align}
\noindent where $d_{min}$ is defined as the minimum safety distance. The assumption $\hb{} > \hr{}$ makes it impossible to have a  collision between the MBS and R-UAVs. 
\par
The presented constraints have further implications, mainly related to the UAVs' propulsion power consumption. To consider the movement-related power, we assume all UAVs are rotatory-wing UAVs. Apart from the UAV location, the power consumption is affected by other factors such as wind, air density and others;  see \cite{Zeng2019EnergyUAV} for more details. In particular, the total power consumption of a rotatory wing UAV consists of the communication-related power and the movement-related power. As the second term is much higher than the first, we assume that the communication power is nearly constant compared to the propulsion term. Therefore, we approximate the movement power for a rotatory-wing UAV as \cite{Zeng2019EnergyUAV}: 
\begin{align}\label{eq:consumption}
    P_C[n] = P_{0}\bigg( 1 + \frac{3 ||\boldsymbol{v}[n]||^2}{U^2_{tip}} \bigg) + \frac{1}{2}d_0 \rho_a s B  ||\boldsymbol{v}[n]||^3 ,
\end{align}
\noindent where $\boldsymbol{v} = \frac{ \boldsymbol{q}[n+1] - \boldsymbol{q}[n]  }{\delta}$,  $P_{0}$ is the blade-profile power in hovering status constant, $U_{tip}$ represents the speed of the rotor blade, $d_0$ is the fuselage drag, $s$ is the rotor solidity, and $\rho_a$ and $B$ denote the air density and rotor disc area, respectively. We assume the same parameters as in \cite{Zeng2019EnergyUAV}.

For the communication power, we use variables $\pb{}$ and $\pm{}$ for the MBS and R-UAVs  at time $n$, respectively. Both are subject to average and peak power constraints. In particular, for the average terms we have:

\begin{subequations}
    \begin{tabularx}{\textwidth}{XX}
        \begin{equation}\label{ct:powerAvgBS}
        \frac{1}{N} \sum \limits_{n = 1}^N\pb{} \leq P_{B,avg} 
        \end{equation}
        & 
      \begin{equation}\label{ct:powerAvgR}
        \frac{1}{N} \sum \limits_{n = 1}^N \pm{} \leq P_{R,avg} \ls \ls \forall m,
      \end{equation}
    \end{tabularx}
\end{subequations}
\noindent while for the maximum instantaneous transmit power, the constraints are:

\begin{subequations}
    \begin{tabularx}{\textwidth}{XX}
        \begin{align}\label{ct:powerBS}
            \pb{} \leq P_{B,max} \ls \ls \forall n
        \end{align}
        & 
      \begin{align}\label{ct:powerR}
            \pm{} \leq P_{R,max} \ls \ls \forall m,n.
        \end{align}
    \end{tabularx}
\end{subequations}
Furthermore, Air-to-Air (A2A) and Air-to-Ground (A2G) channel modeling  is an active research topic. In fact, channel measurements in many practical scenarios, such as rural, have shown that both A2A and A2G communications follow a free space path-loss model  \cite{Itu-rGuidelinesServices,LOSLTE,Khuwaja2018ACommunications}. Such assumption is subject to the fact that UAVs should fly at a considerable altitude, since the probability of being in LoS mainly depends on the distance and altitude of the UAV transceiver \cite{8337920},\cite{7936620}.  Therefore, with the aim of providing essential insight and under the premise that the involved UAVs meet such conditions, we assume the wireless channels are mainly dominated by LoS links. A possible extension to Non-LoS (N-LoS) channels is left as future work. We also assume possible Doppler mismatches caused by the UAV dynamics  are compensated at the receiver, as well as the existing asynchrony  between the involved clocks. Therefore,  the channel gain from the MBS to the $m$-th R-UAV at time $n$ is given by: 
\begin{align}\label{eq:channel1}
    h_{B,m}[n] = \frac{A d_{0,A}^\kappa}{ (||\qbs{} - \qn||^2 \ls + \ls ( \hb{} - \hr{})^{2} )^{ \frac{\kappa}{2}} },
\end{align}
\noindent where $A$ is a unit-less constant depending on the antenna characteristics, $d_{0,A}$ is a reference distance for the A2A channel and $\kappa \geq 1$ refers to the  path-loss exponent. Similarly, we can define the channel gain between the $m$-th R-UAV and the $k$-th GU as:
\begin{align}\label{eq:channel2}
    h_{m,k}[n] = \frac{A d_{0,G}^\kappa}{ (|| \qn - \wk||^2 \ls + \ls \hr{}^{2} )^{ \frac{\kappa}{2}} },
\end{align}
\noindent where $d_{0,G}$ is the reference distance for the A2G channel. Furthermore, we adopt the notion of angle-dependent antenna gains for UAV  optimization problems in which UAVs are equipped with directional antennas \cite{Globec,9086619}.  At time $n$, the antenna gain of the MBS transmitting to the $m$-th R-UAV, denoted by $G_B(\theta_{B,m}[n])$, and the antenna gain of the $m$-th R-UAV transmitting to the $k$-th GU, denoted by $G_R(\theta_{m,k}[n])$, are:
\begin{align}\label{eq:antenna1}
\begin{small}
    G_B(\theta_{B,m}[n])  = D_o(r_B[n])cos^{r_B[n]}(\theta_{B,m}[n])  = D_0(r_B[n]) \frac{| \hb{} - \hr{} |^{r_{B}[n]}}{(|| \qbs{} - \qn||^2  +  ( \hb{} - \hr{} )^{2} )^{ \frac{r_B[n]}{2}} }, 
\end{small}
\end{align}
\begin{align}\label{eq:antenna2}
    G_R(\theta_{m,k}[n]) \ls = \ls D_o(r_m[n])cos^{r_m[n]}(\theta_{m,k}[n]) = D_0(r_m[n]) \frac{\hr{}^{r_{m}[n]}}{(|| \qn - \wk||^2 \ls + \ls  \hr{}^{2} )^{ \frac{r_m[n]}{2}} }. 
\end{align}
The model depends on parameters $ \{r_B[n] , r_m[n] \} \geq 1$, which define the maximal directivity of the antenna at  $\theta = 0$ as $D_0(r) = 2 (r + 1)$  \cite{ConstantineABalanis2016AntennaDesign}. Note that $r = 0$ is the same as having an isotropic antenna and, for simplicity, we ignore side-lobes,  represented by $cos(l\theta)$ patterns. The larger the parameter $r$, the narrower the beam and therefore the directivity of the antenna would increase. Therefore, if we are interested in covering a precise area, it is better to use narrow beams. As previously mentioned, we allow the UAVs to optimize their beamwidths $\{ r_B[n], r_m[n] \}$. 
Since UAV transceivers have limited hardware capacity to switch the beamwidth, we limit the range of possibles values through the following inequalities:
\begin{align}\label{eq:degreeval}
    r_{min} \leq \{ r_B[n], r_m[n] \} \leq r_{max } \ls \ls \forall m, n.
\end{align}
On the other hand, and for the sake of simplicity, the receiver antennas placed at R-UAVs and GUs are assumed to have a unitary power gain. However, we could add such directional patterns at the receiver side as well with minor modifications in our formulation. \par

In order not to overload the formulation, we define the  terms $\rhobs{}  = 2\ls A \ls d_{0,A}^\kappa $ and $\rhoo{} =  2\ls A \ls d_{0,G}^\kappa$. Thus, we can express the instantaneous rate from the MBS to the $m$-th R-UAV as:
\begin{align}\label{eq:rateBR}
    R_{B,m}[n] =  \log_2 \bigg( 1 + \frac{\pb{} \rhobs{} (r_B[n] + 1)| \hb{} - \hr{} \ls |^{r_{B}[n]} }{ \sigma^2 (||\qbs{} - \qn||^2 + (  \hb{} - \hr{})^{2} )^{ \frac{\kappa + r_B[n]}{2}  } }  \bigg),
\end{align}
\noindent where $\sigma^2$ is the noise power at the receiver side, following a  circularly symmetric complex Gaussian distribution $\mathcal{CN}(0, \sigma^2)$. Similarly, at a generic snapshot, the instantaneous rate from the $m$-th R-UAV to the $k$-th GE is:
\begin{align}\label{eq:rateRU}
    R_{m,k}[n] =   \log_2 \bigg( 1 + \frac{\pm{} \rhoo{} (r_m[n]+1)\hr{}^{r_{m}[n]}}{ \sigma^2 (||\qn - \wk||^2 + \hr{}^{2} )^{ \frac{\kappa + r_m[n]}{2}  } }  \bigg).
\end{align}
Furthermore, we use a TDMA in both hops. For simplicity, we assume this system is capable of mitigating interference, either because there is enough bandwidth ($M + 1$ orthogonal channels are required) or because smart re-using techniques are utilized to ensure that the distances between links re-using the same channel are large enough to make the co-channel interference negligible. We introduce variables 
$\beta_{B,m}[n]$ and $a_{k,m}[n]$ representing the TDMA scheduling in the first and second hops, respectively. As a consequence, since the MBS can only serve one R-UAV at each time, apart from the binary assumption on $\bbm{}$, the following constraint must be met:
\begin{align}\label{ct:Beta1}
     0 \leq\sum_{m} \bbm{} \leq 1  \ls \ls \forall n.
\end{align}
Besides, a fixed R-UAV can serve only one GU and a fixed GU can only be served by one R-UAV. As a result, two more constraints appear in our formulation, ensuring a one-to-one mapping between R-UAVs and GUs:

\begin{subequations}
    \begin{tabularx}{\textwidth}{XX}
        \begin{align}\label{ct:Alpha1}
             0 \leq\sum_{k}\akm \leq 1  \ls \ls \forall m,n,
        \end{align}
        & 
      \begin{align}\label{ct:Alpha2}
            0 \leq \sum_{m}\akm \leq 1  \ls \ls \forall k,n.
        \end{align}
    \end{tabularx}
\end{subequations}
Finally, routing and relaying problems are subject to causality constraints, meaning that a given router/relay cannot forward any information that has not previously arrived.
Hence, each R-UAV needs a sufficiently large buffer to store the information from the MBS until it is capable of re-transmitting it to the GUs. In this work, we assume the buffer has been previously designed and has enough memory for relaying purposes. Many authors assume a processing time of one time slot \cite{Zhang2018TrajectoryCommunications}. However, to analyze the consequences of such delay, we consider a general and deterministic processing time  $D \geq 0$. Therefore, the following causality constraint must be taken into account:
\begin{align}\label{ct:Causality}
    \sum \limits_{i = 1}^{n-D} \beta_{B,m}[i] R_{B,m}[i] \geq \sum \limits_{i = D+1}^{n} \sum \limits_{k = 1}^{K}  a_{m,k}[i]R_{m,k}[i] \mspace{20mu} \forall m, \ls \ls n = D + 1,\dots,N.
\end{align}
As a result, the instantaneous rate of the $k$-th GU  is:
\begin{align}
    R_k[n] = \sum \limits_{m = 1}^M \akm R_{m,k}[n],
\end{align}
\noindent and its averaged value is given by $\Rk = \frac{1}{N} \sum \limits_{n = D+1}^N \Rkn{}$.

\section{Problem Formulation}\label{Sec:Form}
The goal of this work is to maximize the minimum users' rate in a UAV relay network in which both the MBS and the R-UAVs are allowed to change their position over time. However, the UAV movements are subject to physical constraints, as presented in \eqref{ct:velBS}-\eqref{ct:Collision}. Furthermore, UAVs must satisfy a lifetime constraint that considers the total power of the battery. From the communications perspective, the problem is subject to communication-related power constraints, \eqref{ct:powerAvgBS}-\eqref{ct:powerR}. Since we allow to adapt the beamwidth of the UAV antennas,  we include constraint \eqref{eq:degreeval} for the minimum and maximum directivity degrees. In addition, we mentioned the TDMA rules in both hops  \eqref{ct:Beta1}-\eqref{ct:Alpha2} and the need of the causality constraint \eqref{ct:Causality}.  To this end, the optimization variables include the TDMA scheduling association in both hops, represented by $\A = \{ \bbm{} \ls \ls, \ls \ls  \akm \ls \ls \forall m,k,n \}$, the 2-D position of the MBS and R-UAVs over all time slots, denoted by $\Q{} = \{ \qbs{} \ls , \ls \qn \ls \ls \forall m, n \}$, the beamwidths of each UAV  $\boldsymbol{R} = \{ r_B[n]  \ls , \ls  r_m[n] \ls \ls \forall m,n  \}$ and, finally, the transmit power of both the MBS and R-UAVs, given by the set $\P{} = \{ \pb{} \ls , \ls \pm{} \ls \ls \forall m,n \}$. As a result, denoting $\mu \ls = \ls \min_k \ls \Rk$, the optimization problem can be formulated as:
\begin{equation}
    \begin{aligned}
    & \underset{\mu, \A, \Q, \R, \P}{\text{max}}
    & & \mu \\
    & \text{s.t.} & &  \Rk \geq \mu \ls \ls \ls \forall k \\
    & & &  \akm, \bbm{} \in \{0,1\} \ls \ls \ls \forall m,k,n \\
    & & & \sum \limits_{n = 1}^N P_{C,B}[n]  \leq P_{UAV} \mspace{10mu} , \mspace{10mu} \sum \limits_{n = 1}^N P_{C,m}[n]  \leq P_{UAV}, \ls \ls \forall m  \\
    & & &\eqref{ct:velBS}- \eqref{ct:Collision}, \eqref{ct:powerAvgBS}- \eqref{ct:powerR}, \eqref{eq:degreeval}, \eqref{ct:Beta1}- \eqref{ct:Causality}
    \end{aligned}
\end{equation}

\noindent where $P_{C,B}[n]$ and $P_{C,m}[n]$ refer to the movement-related power consumption of the MBS and R-UAVs, respectively, and $P_{UAV}$ is the total trajectory-related stored power in the batteries. In addition, such a problem presents two main issues, making it challenging and difficult to solve. First, the binary nature of $\bbm{}$ and $\akm$ and the corresponding integer constraints result in an NP-hard problem. To make it more tractable, we  relax the integer constraint as follows:
\begin{align}\label{ct:alphareal}
    0 \leq \{ \bbm{} \ls , \ls \akm \} \leq 1 \ls \ls \ls \forall m,k,n.
\end{align}
Second, the non-convexity of many constraints with respect to the trajectories and beamwidths adds to the complexity of the problem. Therefore, it is desirable to develop a more tractable formulation. Hence, in light of such challenges, we propose a method in which we split the original problem into four sub-problems and solve them separately in the next section.

\section{Proposed Solution}\label{Sec:Sol}
As mentioned, since we have four sets of optimization variables, we solve four different sub-problems: (i) TDMA scheduling optimization with fixed UAV trajectories, beamwidths, and transmit power;  (ii) UAV trajectory optimization with fixed TDMA scheduling, beamwidths, and transmit power; (iii) beamwidth optimization with fixed TDMA scheduling, UAV trajectories, and transmit power; and (iv) power optimization with fixed TDMA scheduling, UAV trajectories, and beamwidths. Once the solution of each problem is obtained separately, we apply a Block Coordinate Descent (BCD) method to iteratively maximize the minimum user rate until convergence  \cite{BCD}.

\subsection{TDMA Scheduling Optimization Sub-Problem}
First, we solve the TDMA association in both hops, given by the set of variables $\A$. For fixed  $\Q$, $\R$ and $\P$, such problem, named (O-X1), can be formulated as:
\begin{equation}
    \begin{aligned}
    & \underset{\mu, \A}{\text{max}}
    & & \mu \\
    & \text{s.t.} & &  \Rk \geq \mu \\
    & & & \eqref{ct:Beta1} - \eqref{ct:Causality} , \eqref{ct:alphareal}
    \end{aligned}
\end{equation}
Since the objective function and the constraints are linear with respect to the optimization variables, we can efficiently solve it using standard linear programming (LP) techniques, such as the interior point method \cite{Boyd2004ConvexOptimization}. A method to reconstruct the solution of (O-X1) into a binary scheduling without compromising optimality has been studied in \cite{Wu2018JointNetworks}.

\subsection{UAV Trajectory Optimization Sub-Problem}\label{subsect:power}
Now, we solve the sub-problem that relates the trajectories of the MBS and R-UAVs. We consider fixed values for $\A$, $\R$ and $\P$. Therefore, the UAV trajectories can be optimized by means of solving the next problem, named (O-T1):
\begin{equation}
    \begin{aligned}
    & \underset{\mu, \Q}{\text{max}}
    & & \mu \\
    & \text{s.t.} & &  \Rk \geq \mu \ls \ls \forall k \\
    & & & \sum \limits_{n = 1}^N P_{C,B}[n]  \leq P_{UAV} \mspace{10mu} , \mspace{10mu} \sum \limits_{n = 1}^N P_{C,m}[n]  \leq P_{UAV}, \ls \ls \forall m  \\
    & & &\eqref{ct:velBS}- \eqref{ct:Collision},  \eqref{ct:Causality}
    \end{aligned}
\end{equation}
Note that the terms $\Rk{}$ are non-convex with respect to $\qbs{}$. Furthermore, \eqref{ct:Collision} and \eqref{ct:Causality} are non-convex constraints as well. Consequently, (O-T1) is a non-convex optimization problem, hard to solve and without a general technique to obtain the global optima. To handle it, we first reformulate (O-T1) as an equivalent sub-problem, (O-T2). Afterwards, we apply the SCP technique \cite{Diehl2010RecentEngineering}  to solve it. First, to simplify the presentation, let us define the following non-trajectory dependent terms: 
\begin{align*}
    \Gamma_{B,m}[n] = \frac{ \pb{} \rhobs{} (r_B[n] + 1)| \hb{} - \hr{} \ls |^{r_{B}[n]}}{ \sigma^2  }, \mspace{24mu} \Gamma_{m,k}[n] =\frac{ \pm{} \rhoo{}(r_m[n]+1)\hr{}^{r_{m}[n]}}{ \sigma^2 }.
\end{align*}
Then, we define the set of slack variables $\boldsymbol{D} = \{ \dkm{}[n], \ls \ls n = D + 1,\dots, N \ls \ls \forall k,m  \} $. Consequently, we can reformulate (O-T1) as an equivalent problem, named (O-T2):

\begin{equation*}
    \begin{aligned}
    & \underset{\mu, \Q, \boldsymbol{D}}{\text{max}}
    & & \mu \\
    & \text{s.t.} & &  \sum \limits_{n = D+1}^N \sum \limits_{m = 1}^M \akm \dkm{}[n] \geq \mu \ls \ls \forall k  \\
    & & &\sum \limits_{i = 1}^{n-D} \beta_{B,m}[i] R_{B,m}[i] \geq \sum \limits_{i = D+1}^{n} \sum \limits_{k = 1}^{K} a_{m,k}[i] \dkm{}[i] \mspace{30mu} n = D+1,\dots,N \ls , \ls \forall m\\
    & & & {{\dkm{}[n] \leq \log_2 \bigg( 1 + \frac{\Gamma_{m,k}[n]}{   (||\qn - \wk||^2 + \hr{}^{2} )^{ \frac{r_m[n]+\kappa}{2}}    }  \bigg)}} \ls \ls \mspace{20mu} n = D+1,\dots,N \ls , \ls \forall k, m\\
    & & & \sum \limits_{n = 1}^N P_{C,B}[n]  \leq P_{UAV} \mspace{10mu} , \mspace{10mu} \sum \limits_{n = 1}^N P_{C,m}[n]  \leq P_{UAV}, \ls \ls \forall m  \\
    & & &\eqref{ct:velBS}- \eqref{ct:Collision} 
    \end{aligned}
\end{equation*}

\begin{lemma}\label{Lemma1}
 (O-P2) is equivalent to (O-P1).
\end{lemma}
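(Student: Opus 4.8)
The plan is to read ``equivalent'' in the usual optimization sense: (O-P1) and (O-P2) have the same optimal value, and from an optimizer of one we can recover an optimizer of the other with the same objective. I would prove this by a pair of value-preserving inclusions between the two feasible sets, keeping in mind the only cosmetic difference, namely that the max-min constraint of (O-P1) carries the averaging factor $\tfrac{1}{N}$ while the corresponding constraint of (O-P2) is stated cumulatively; absorbing this factor into $\mu$ (or rescaling $\mu$ by $N$) makes the two objectives identical, so I will not dwell on it.

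First, the easy inclusion, (O-P1)$\Rightarrow$(O-P2). Take any feasible $(\mu,\Q)$ of (O-P1) and define the slacks at their upper bounds, $\dkm[n]=R_{m,k}[n]$. Then the capacity constraint $\dkm[n]\le R_{m,k}[n]$ of (O-P2) holds with equality; the causality constraint of (O-P2), $\sum_{i=1}^{n-D}\beta_{B,m}[i] R_{B,m}[i]\ge\sum_{i=D+1}^{n}\sum_{k}a_{m,k}[i]\dkm[i]$, collapses exactly to \eqref{ct:Causality}; and the first constraint of (O-P2) becomes $\sum_{n=D+1}^{N}\sum_{m}\akm R_{m,k}[n]=N\Rk\ge\mu$, which is the max-min constraint of (O-P1). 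Since \eqref{ct:velBS}--\eqref{ct:Collision} and the propulsion-energy constraints are shared verbatim, $(\mu,\Q,\boldsymbol{D})$ is feasible for (O-P2) with the same value, so the optimal value of (O-P2) is at least that of (O-P1).

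For the reverse inclusion I would argue that at an optimal solution of (O-P2) the capacity constraints are active, $\dkm[n]=R_{m,k}[n]$; once this is established, discarding the slacks returns a point $(\mu,\Q)$ that satisfies \eqref{ct:Causality} and the max-min constraint of (O-P1) with the same $\mu$, yielding the matching upper bound on the optimal value. The activeness claim I would prove by contradiction: if some $\dkm[n]<R_{m,k}[n]$ strictly, I would try to raise it toward its bound, which can only relax the max-min constraint (there $\dkm[n]$ enters with the nonnegative coefficient $\akm$) and, if $m$ serves the binding GU, lets $\mu$ increase, contradicting optimality.

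The main obstacle is precisely this last step, because $\dkm[n]$ does \emph{not} appear only where a larger value helps: it also sits on the right-hand side of the causality constraint for relay $m$ at every later index, so increasing it tightens the buffer budget. The tightness argument must therefore be carried out so that the perturbation respects \eqref{ct:Causality}, for instance by raising the slack only in the latest time slot that still has buffer room, or by a joint adjustment of the $\dkm[n]$ along relay $m$ that preserves all causality inequalities while strictly increasing $\sum_n\akm\dkm[n]$. Handling this coupling carefully --- as opposed to the one-variable perturbation that suffices when a slack appears only in the objective --- is the crux of the proof; everything else reduces to direct substitution.
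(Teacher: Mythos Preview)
Your forward inclusion (set $d_{m,k}[n]=R_{m,k}[n]$) matches the paper. For the reverse direction, however, the paper takes a different and much shorter route than the one you sketch, and it bypasses precisely the obstacle you flag.

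You try to push the slack $d_{m,k}[n]$ \emph{up} toward its cap $R_{m,k}[n]$, and then correctly observe that $d_{m,k}[n]$ also sits on the right-hand side of the causality inequalities, so raising it tightens the buffer budget; you then leave the argument at the level of ``one would have to perturb carefully.'' The paper (which proves Lemma~2 explicitly and states that Lemma~1 follows by the same device) instead leaves the slack variables untouched and lowers the \emph{other} side of the inequality: at an optimum of the reformulated problem, if $t_{m,k}[n]<R_{m,k}[n]$, one reduces the underlying primal variable (there, $p_m[n]$) until $R_{m,k}[n]$ drops to $t_{m,k}[n]$. The point you missed is that in the reformulation both the rate constraint and the causality constraint are written entirely in terms of the slack variables, not in terms of $R_{m,k}[n]$; hence decreasing $R_{m,k}[n]$ affects \emph{only} the capacity bound $t_{m,k}[n]\le R_{m,k}[n]$ (and, in the power version, merely relaxes the average/peak power constraints). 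The objective $\mu$ is unchanged and no other constraint tightens, so one obtains another optimizer in which every slack bound is active; discarding the slacks then yields a feasible point of the original problem with the same value. This avoids the causality coupling altogether, whereas your proposed upward perturbation of $d_{m,k}[n]$ does not.

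One caveat worth noting: the paper only spells this out for the power problem and asserts the trajectory case is ``similar.'' There the analogous move would be to perturb $\boldsymbol{q}_m[n]$ to reduce $R_{m,k}[n]$, which is less clean since the trajectory enters velocity, collision, propulsion, and first-hop rate constraints; the paper does not address this, so neither argument is fully airtight for Lemma~1 as stated. But the intended mechanism is the ``lower the bound'' perturbation, not the ``raise the slack'' one you pursue.
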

\begin{proof}
The proof can be found in Appendix \ref{App:ProofLema1}.
\end{proof}

However, since the instantaneous rates and the collision-avoidance terms are non-convex with respect to $\qbs{}$ and $\qn{}$, it is still a challenging problem to solve. \par
In the following, we develop a method based on the SCP technique. SCP algorithms alternate between two steps: (i) approximate the non-convex terms by convex terms, providing an approximated problem and (ii) optimally solve the approximated problem until convergence. However,  both instantaneous rates, the one from the MBS to R-UAVs and the one from R-UAVs to GUs, are convex with respect to $||\qbs{} - \qn||^2$ and $|| \qn - \wk||^2$, respectively. Since every convex function is lower bounded by its first order Taylor expansion, we can obtain a lower bound on both rates as presented in the following terms:
\begin{multline}\label{eq:Bound1}
    R_{B,m}[n] 
    \geq -A_{B,m}^p[n]\big(  ||\qbs{} - \qn{}||^2 - ||\boldsymbol{q}_{B}^p[n] - \boldsymbol{q}_m^p[n]||^2 \big) +  B_{B,m}^p[n] = R_{B,m}^{lb}[n],
\end{multline}
where 
\begin{align*}
    A_{B,m}^p[n] = \frac{\frac{(\kappa + r_B[n])}{2} \log_2(e)}{ ( ||\boldsymbol{q}_{B}^p[n]-\boldsymbol{q}_m^p[n] ||^2 \ls + \ls (\hb{} - \hr{})^2 ) } 
    \cdot \frac{1}{\bigg(  1 + \frac{\big(||\boldsymbol{q}_{B}^p[n]-\boldsymbol{q}_m^p[n] ||^2 \ls + \ls (\hb{} - \hr{})^2 \big)^{\frac{\kappa + r_B[n]}{2}} }{\Gamma_{B,m}[n]}\bigg)}, 
\end{align*}
and
\begin{align*}
    B_{B,m}^p[n] = R_{B,m}[n] \bigg|_{\qbs{} = \boldsymbol{q}_{B}^p[n] \ls, \ls \qn = \boldsymbol{q}_m^p[n] },
\end{align*}
\noindent where super-index $p$ refers to the point where the function is approximated by its lower bound. Proceeding in the same manner for $R_{m,k}[n]$, we obtain:
\begin{multline}\label{eq:Bound2}
    R_{m,k}[n] 
    \geq- C_{m,k}^p[n]\big(  ||\qn{} - \wk||^2 - ||\boldsymbol{q}_{m}^p[n] - \wk||^2   \big) +  D_{m,k}^p[n] = R_{m,k}^{lb}[n],
\end{multline}
where
\begin{align*}
    \small{C_{m,k}^p[n] = \frac{\frac{(\kappa + r_m[n])}{2} \log_2(e)}{ ( ||\boldsymbol{q}_m^p[n]  - \wk{} ||^2 \ls + \ls \hr{}^2 ) } 
    \cdot \frac{1}{\bigg(  1 + \frac{\big(||\boldsymbol{q}_m^p[n] - \wk ||^2 \ls + \ls \hr{}^2 \big)^{\frac{\kappa + r_m[n]}{2}}}{\Gamma_{m,k}[n]} \bigg)} \ls \ls , \ls \ls D_{m,k}^p[n] = R_{m,k}[n] \bigg|_{\qn = \boldsymbol{q}_m^p[n] }.}
\end{align*}

Finally, for the collision-avoidance constraint, after applying SCP, we obtain:
\begin{align}\label{ct:collision2}
    - 2(\boldsymbol{q}_m^p[n]-\boldsymbol{q}_j^p[n])^T(\boldsymbol{q}_m[n]-\boldsymbol{q}_j[n]) 
    -|| \boldsymbol{q}_m^p[n]  -  \boldsymbol{q}_j^p[n]||^2  \geq d^2_{min,h}  \ls \ls \ls \forall \ls n,m, j \neq m.
\end{align}
With any given local points, $\boldsymbol{q}_m^p[n]$ and $\boldsymbol{q}_B^p[n]$, and the lower bounds obtained in \eqref{eq:Bound1} and \eqref{eq:Bound2}, we can formulate the next convex optimization problem, named (O-T3):
\begin{equation*}
    \begin{aligned}
    & \underset{\mu, \Q, \boldsymbol{D}}{\text{max}}
    & & \mu \\
    & \text{s.t.} & &  \sum \limits_{n = D+1}^N \sum \limits_{m = 1}^M \akm \dkm{}[n] \geq \mu \ls \ls \forall k  \\
    & & &\sum \limits_{i = 1}^{n-D} \beta_{B,m}[i] R_{B,m}^{lb}[i] \geq \sum \limits_{i = D+1}^{n} \sum \limits_{k = 1}^{K} a_{m,k}[i] \dkm{}[i] \mspace{30mu} n = D+1,\dots,N \ls , \ls \forall m\\
    & & & \dkm{}[n] \leq R_{m,k}^{lb}[n] \mspace{30mu} n = D+1,\dots,N \ls \forall k,m \\
    & & & \sum \limits_{n = 1}^N P_{C,B}[n]  \leq P_{UAV} \mspace{10mu} , \mspace{10mu} \sum \limits_{n = 1}^N P_{C,m}[n]  \leq P_{UAV}, \ls \ls \forall m  \\
    & & &\eqref{ct:velBS}, \eqref{ct:collision2}  
    \end{aligned}
\end{equation*}

Since all constraints in (O-T3) are jointly convex with respect to $\qn{}$, $\qbs{}$ and $\dkm{}[n]$, we  conclude that (O-T3) is a convex optimization problem, and therefore can be solved by standard optimization solvers, such as CVX \cite{Boyd2004ConvexOptimization}. As a result, the optimal value obtained for (O-T3) serves as a lower bound on the optimal solution of the original problem, (O-T1).

\subsection{Beamwidth Optimization Sub-Problem}\label{subsect:beam}
In this section, we focus on the third sub-problem, which tries to attain the optimal values for the directivity degrees, $r_B[n]$ and $r_m[n]$, of both the MBS and R-UAVs given a fixed TDMA scheduling, UAV trajectories, and power allocation. Such a problem is, in general, non-convex. However, we provide a 
discussion for the convex case, as the same idea applies to the results obtained from the general non-convex formulation in Section \ref{Sec:Res}. For such a case, the optimal beamwidth degrees for the MBS and R-UAVs need to be extracted from a highly non-linear equation of the type: 
\begin{eqnarray}\label{eq:levcurv}
   \frac{ \cos^{r}{(\theta}) \bigg( (r+1) \log( \cos(\theta)) + 1 \bigg)  }{   (r + 1 ) \cos^{r}{(\theta)} + \frac{1}{\Gamma^{'}}      } =  K,
\end{eqnarray}
\noindent where for simplicity, we have dropped all sub-scripts and time indices,  $K$ depends on the Lagrangian multipliers and TDMA variables, and $\Gamma^{'}$ is the equivalent channel from transmitter to receiver. Recall that $\theta$ represents the elevation angle between source and destination. 
For a given $K$ and $\Gamma^{'}$, an analytical expression for the optimal directivity value $r$ in terms of $\theta$ is not available. However, numerical solutions for $\Gamma^{'} = 10$ and $\Gamma^{'} = 50$ given different values of $K$ are provided in Figs.  \ref{fig:curv1} and  \ref{fig:curv2}, respectively. As can be seen in these figures, given a fixed value of $K$, in general, for low elevation angles, the optimal $r$ tends to be high. A low elevation angle corresponds to a source that  flies nearly on top of the destination. In such a case, the source increases the value of $r$ in order to increase the directivity with a narrower and more focused beam. On the other hand, when the elevation angle is high, i.e.,  the source does not fly near the destination, the tendency is to decrease the value of $r$ and therefore have a wider beam to reach the destination with a less directive pattern. The ranges of elevation for which such assumption is valid are mainly determined by the curves in \eqref{eq:levcurv} that present only one solution, e.g. $0 \leq \theta \leq 55$ for $\Gamma^{'} = 10$ and $0 \leq \theta \leq 60$ for $\Gamma^{'} = 50$, values easy to attain due to the UAV flying altitudes. \par

Going back to the general non-convex beamwidth optimization problem, we can write such a problem, named (O-R1), as follows:
\begin{equation}\label{opt:Degree}
    \begin{aligned}
    & \underset{\mu, \R}{\text{max}}
    & & \mu \\
    & \text{s.t.} & &  \Rk \geq \mu \ls \ls \forall k\\
    & & & \eqref{eq:degreeval},  \eqref{ct:Causality}
    \end{aligned}
\end{equation}
To make equations more manageable, we define the following constant terms:
\begin{align*}
    \Gamma_{B,m}^{'}[n] =\frac{ \pb \rhobs{}{}}{ \sigma^2 (||\qbs{} - \qn{}||^2 + (\hb{}-\hr{})^{2} )^{ \frac{\kappa}{2}  }}, \mspace{16mu}  \Gamma_{m,k}^{'}[n] =\frac{ \pm{} \rhoo{}}{ \sigma^2 (||\qn - \wk||^2 + \hr{}^{2} )^{ \frac{\kappa}{2}  }}.
\end{align*}

As mentioned, (O-R1)  is in general non-convex. We apply the  Sequential Linear Programming (SLP) technique, consisting of two steps: (i) linearizing the non-convex functions and (ii) solving the LP problem. Consequently, we can formulate the following LP problem, named (O-R2):
\begin{equation*}
    \begin{aligned}
    & \underset{\mu, \R }{\text{max}}
    & & \mu \\
    & \text{s.t.} & &  \sum \limits_{n = D+1}^N \sum \limits_{m = 1}^M \akm R_{m,k}^L[n] \geq \mu \ls \ls \forall k  \\
    & & &\sum \limits_{i = 1}^{n-D} \beta_{B,m}[i] R_{B,m}^L[i] \geq \sum \limits_{i = D+1}^{n} \sum \limits_{k = 1}^{K} a_{m,k}[i] R_{m,k}^L[i] \mspace{20mu} n = D+1,\dots,N \ls , \ls \forall m\\
    & & & \max( r_{min}, r_B^p[n] - \epsilon) \leq r_B[n] \leq \min( r_{max}, r_B^p[n] + \epsilon) \mspace{20mu} \forall n\\
    & & & \max( r_{min}, r_m^p[n] - \epsilon) \leq r_m[n] \leq \min( r_{max}, r_m^p[n] + \epsilon) \mspace{20mu} \forall n,m
    \end{aligned}
\end{equation*}
where the last two constraints make sure the linear approximation is tight enough around $r_B^p[n]$ and $r_m^p[n]$ by means of the parameter $\epsilon$. In addition, $R_{B,m}^L[n]$ and $R_{m,k}^L[n]$ are defined as the first order Taylor expansion with respect to $r_B[n]$ and $r_m[n]$, respectively:
\begin{eqnarray*}
    R_{B,m}^L[n] = E_{B,m}^p[n] (r_B[n] - r_B^p[n]) + F_{B,m}^p[n] , \mspace{8mu} R_{m,k}^L[n] = G_{m,k}^p[n] (r_m[n] - r_m^p[n]) + H_{m,k}^p[n],
\end{eqnarray*}
where:

\begin{small}
\begin{align*}
    \begin{small}
    E_{B,m}^p[n] = \frac{ \Gamma_{B,m}^{'}[n] \cos( \theta_{B,m}[n])^{r_B^p[n]} \big( (r_B^p[n] + 1)\log( \cos( \theta_{B,m}[n]))+1\big)}{ \log(2) \big(  \Gamma_{B,m}^{'}[n] (r_B^p[n] + 1)\cos( \theta_{B,m}[n])^{r_B^p[n]} + 1 \big)} , \mspace{8mu} F_{B,m}^L[n] = R_{B,m}[n] \bigg|_{r_B[n] = r_B^p[n] },
    \end{small}
\end{align*}
\end{small}
and
\begin{small}
\begin{eqnarray*}
    G_{m,k}^p[n] = \frac{ \Gamma_{m,k}^{'}[n] \cos( \theta_{m,k}[n])^{r_m^p[n]} \big( (r_m^p[n] + 1)\log( \cos( \theta_{m,k}[n]))+1\big)}{ \log(2) \big(  \Gamma_{m,k}^{'}[n] (r_m^p[n] + 1)\cos( \theta_{m,k}[n])^{r_m^p[n]} + 1 \big)} , \mspace{8mu} H_{m,k}^L[n] = R_{m,k}[n] \bigg|_{r_m[n] = r_m^p[n] }.
\end{eqnarray*}
\end{small}

\begin{figure}[!htb]
    \minipage{0.48\textwidth}
      \centering
      \includegraphics[scale=0.57]{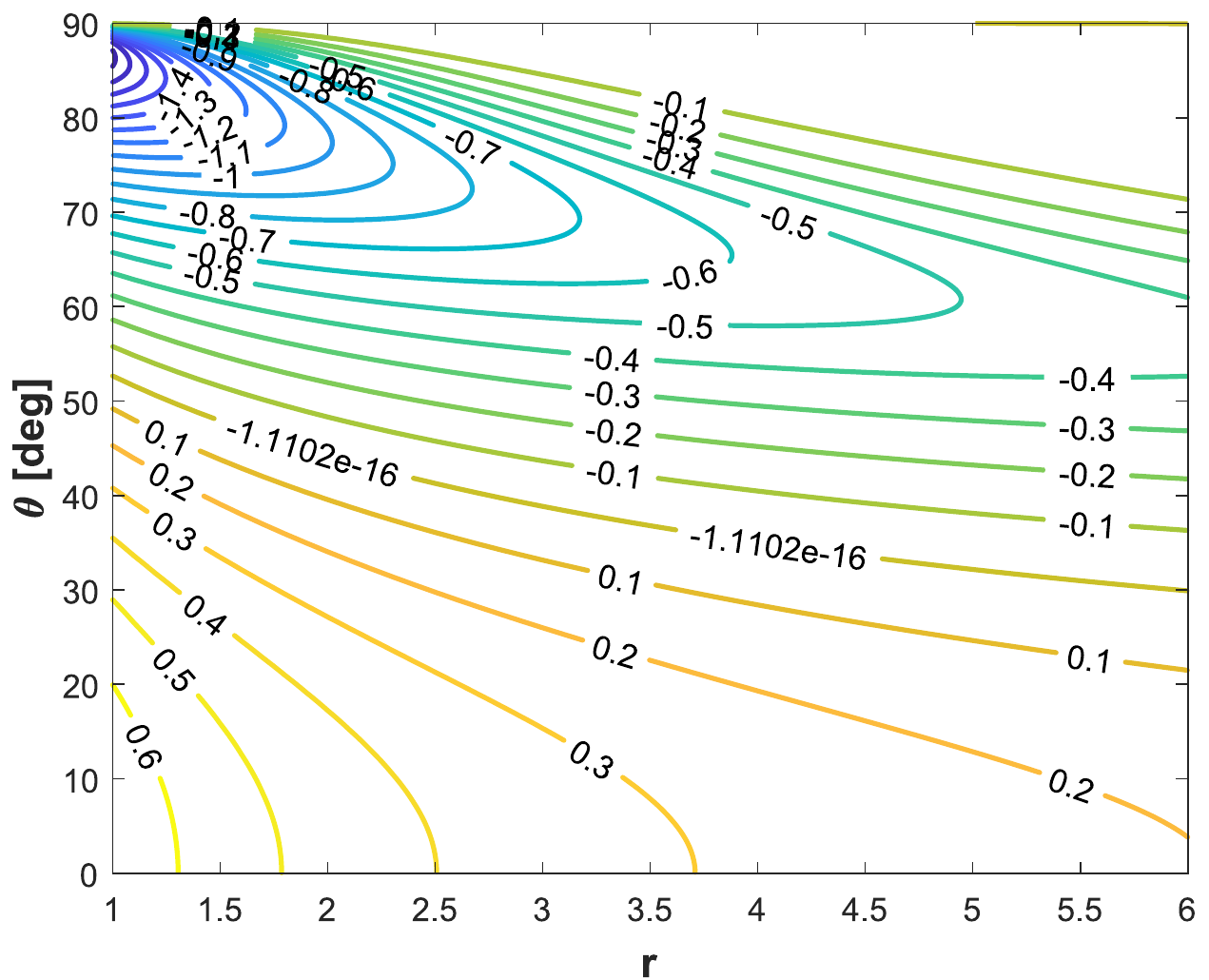}
      \caption{Level curves of Eq. \eqref{eq:levcurv} for a value of $\Gamma^{'} = 10$. }\label{fig:curv1}
    \endminipage\hfill
    \minipage{0.48\textwidth}%
      \centering
      \includegraphics[scale=0.57]{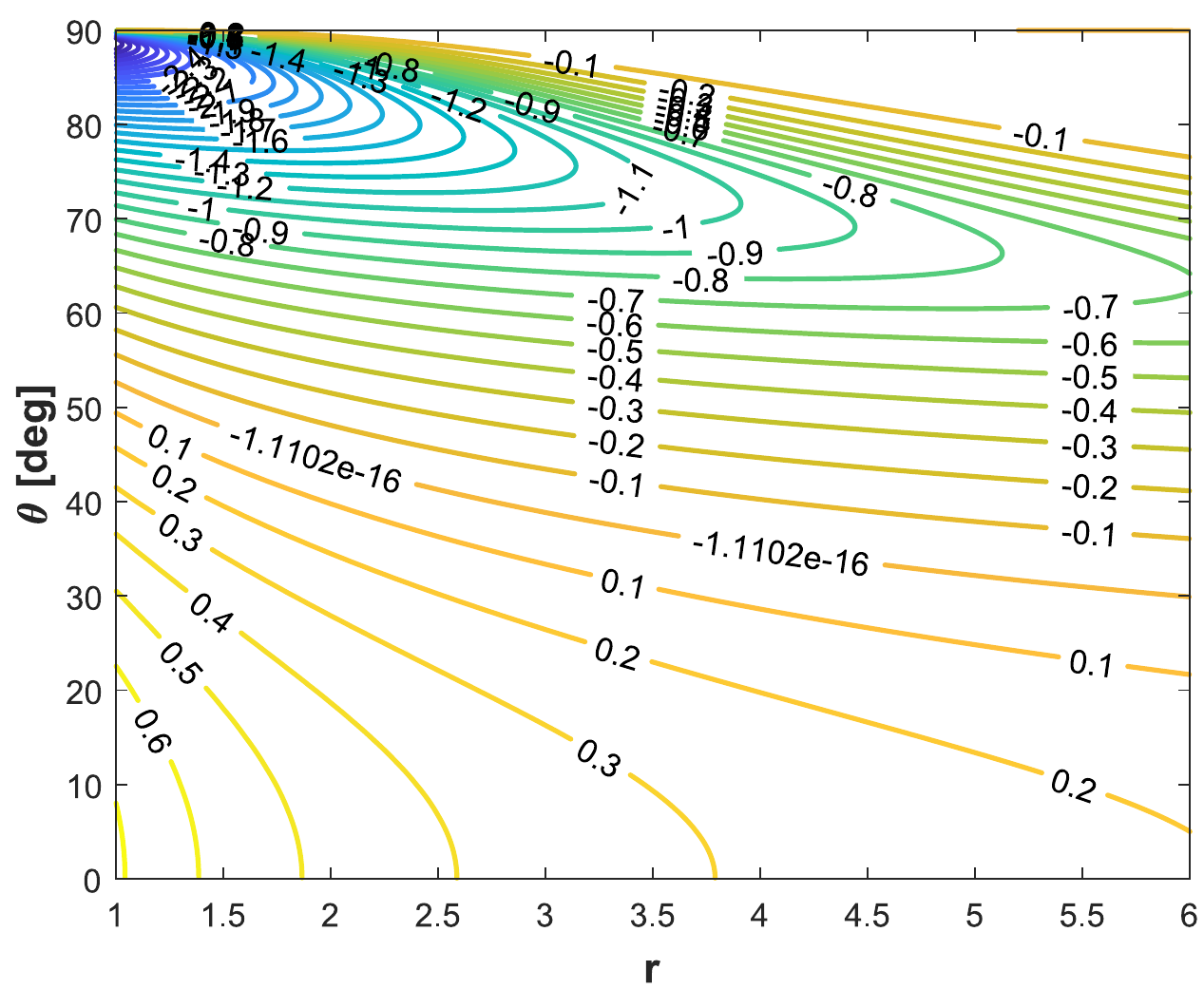}
      \caption{Level curves of Eq. \eqref{eq:levcurv} for a value of $\Gamma^{'} = 50$.}\label{fig:curv2}
    \endminipage
\end{figure}

Once the solutions to  (O-X1), (O-T3) and (O-R2) are obtained, we iterate among them as presented in Algorithm 1 for fixed values of the powers $\P$.

\begin{figure}[!t]
 \removelatexerror
  \begin{algorithm}[H]
   \caption{Optimal TDMA scheduling, UAV trajectories, and beamwidths with fixed power. }
   Set inner iteration number, $j = 1$. \\
   \While { $\frac{ | \mu^j - \mu^{j+1} | }{\mu^j} > \epsilon$ or $j < I_{max,AQR}$ }
   {
      (i) Fix: $\Q^j$ and $\R^j$. Solve (O-X1)  $\xrightarrow{} \ls \A^{j+1}$ \\ 
      (ii) Fix: $\A^{j+1}$ and $\R^j$. Solve (O-T3) $\xrightarrow{} \ls \Q^{j+1}$ \\
      (iii) Fix: $\A^{j+1}$ and $\Q^{j+1}$. Solve (O-R2) $\xrightarrow{} \ls \R^{j+1}$ \\
      $j \xleftarrow{} j + 1$
   }
  \end{algorithm}
  \label{alg:AQR}
\end{figure}

\subsection{Power Allocation Sub-Problem}\label{subsect:power}
Finally, in this section, we consider the power allocation  sub-problem, where the transmit power of both MBS and R-UAVs is jointly optimized assuming fixed values of the TDMA scheduling, UAV trajectories, and antenna beamwidths. The optimization problem, named (O-P1), is given by:
\begin{equation}\label{opt:PowerOriginal}
    \begin{aligned}
    & \underset{\mu, \P}{\text{max}}
    & & \mu \\
    & \text{s.t.} & &  \Rk \geq \mu \\
    & & &\eqref{ct:powerAvgBS}- \eqref{ct:powerR}, \eqref{ct:Causality}
    \end{aligned}
\end{equation}
Due to the non-convexity of Constraint $\eqref{ct:Causality}$, we need to manipulate \eqref{opt:PowerOriginal} to obtain a more tractable problem. To make equations more manageable, we define the following constant terms:
\begin{align*}
    \small{\Gamma_{B,m}^{''}[n] = \frac{ \rhobs{} (r_B[n] + 1)| \hb{} - \hr{} \ls |^{r_{B}[n]}}{ \sigma^2 (||\qbs{} - \qn||^2 + (  \hb{} - \hr{})^{2} )^{ \frac{\kappa + r_B[n]}{2}  } }, \ls \ls \Gamma_{m,k}^{''}[n] =\frac{ \rhoo{}(r_m[n]+1)\hr{}^{r_{m}[n]}}{ \sigma^2 (||\qn - \wk||^2 + \hr{}^{2} )^{ \frac{\kappa + r_m[n]}{2}  }}.}
\end{align*}
In the subsequent, by introducing the set of  slack variables $ \boldsymbol{T} = \{ \tkm{}[n] \ls \ls n = D + 1,\dots, N \ls \ls \forall k,m  \}$, we re-write the original problem, (O-P1), into an equivalent problem, named  (O-P2):
\begin{equation*}
    \begin{aligned}
    & \underset{\mu, \P, \boldsymbol{T}}{\text{max}}
    & & \mu \\
    & \text{s.t.} & &  \sum \limits_{n = D+1}^N \sum \limits_{m = 1}^M \akm \tkm{}[n] \geq \mu \ls \ls \forall k  \\
    & & &\sum \limits_{i = 1}^{n-D} \beta_{B,m}[i] R_{B,m}[i] \geq \sum \limits_{i = D+1}^{n} \sum \limits_{k = 1}^{K} a_{m,k}[i] \tkm{}[i] \mspace{20mu} n = D+1,\dots,N \ls , \ls \forall m\\
    & & & \tkm{}[n] \leq \log_2 ( 1 + \pm{} \Gamma_{m,k}^{''}[n] ) \mspace{20mu} n = D+1,\dots,N \ls , \ls \forall k,m \\
    & & & \eqref{ct:powerAvgBS}- \eqref{ct:powerR}
    \end{aligned}
\end{equation*}

\begin{lemma}\label{Lemma2}
(O-P2) is equivalent to (O-P1).
\end{lemma}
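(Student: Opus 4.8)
The plan is to show the two problems have the same optimal value by exhibiting, for any feasible point of one, a feasible point of the other with the same (or no worse) objective. The key observation is that the slack variables $\tkm[n]$ are only coupled to the power variables through the single constraint $\tkm[n] \leq \log_2(1 + \pm\Gamma_{m,k}^{''}[n])$, and they appear elsewhere only on the \emph{left} side of the two $\geq$ constraints (the min-rate constraint and the causality constraint). Since larger $\tkm[n]$ relaxes those constraints, at optimality we may assume each $\tkm[n]$ is pushed up to its upper bound $\log_2(1 + \pm\Gamma_{m,k}^{''}[n])$, which equals $R_{m,k}[n]$ under the fixed trajectories, beamwidths, and the definition of $\Gamma_{m,k}^{''}[n]$.

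Concretely, I would proceed in two directions. First, take any $(\mu,\P)$ feasible for (O-P1); define $\tkm[n] = R_{m,k}[n] = \log_2(1+\pm\Gamma_{m,k}^{''}[n])$. Then the third constraint of (O-P2) holds with equality, the first constraint of (O-P2) becomes $\sum_{n}\sum_{m}\akm R_{m,k}[n] = \sum_n R_k[n] \geq \mu$ (up to the $\tfrac1N$ normalization, which is harmless for an equivalence statement since it scales $\mu$ uniformly — I would either absorb it or note $\Rk$ already carries it), which is exactly $\Rk \geq \mu$; and the second constraint of (O-P2) reduces to the causality constraint \eqref{ct:Causality}. The power constraints \eqref{ct:powerAvgBS}–\eqref{ct:powerR} are identical in both. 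Hence $(\mu,\P,\boldsymbol{T})$ is feasible for (O-P2) with the same objective, so $\mathrm{opt}(\text{O-P2}) \geq \mathrm{opt}(\text{O-P1})$.

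Second, take any $(\mu,\P,\boldsymbol{T})$ feasible for (O-P2). From the third constraint, $\tkm[n] \leq \log_2(1+\pm\Gamma_{m,k}^{''}[n]) = R_{m,k}[n]$. Substituting this upper bound into the first and second constraints of (O-P2) — both of which have $\tkm[n]$ entered with a nonnegative coefficient $\akm$ on a side that only helps when increased — preserves feasibility: $\sum_n\sum_m \akm R_{m,k}[n] \geq \sum_n\sum_m\akm\tkm[n] \geq \mu$ gives $\Rk\geq\mu$, and likewise the causality constraint \eqref{ct:Causality} follows from the second constraint. Therefore $(\mu,\P)$ is feasible for (O-P1) with the same $\mu$, giving $\mathrm{opt}(\text{O-P1}) \geq \mathrm{opt}(\text{O-P2})$. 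Combining the two inequalities yields equality of optimal values, and the construction above also maps optimizers to optimizers, which is the sense of ``equivalent'' used here.

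The only real subtlety — not a deep obstacle but the point needing care — is the direction of the causality constraint's dependence on $\tkm[n]$: $\tkm[n]$ appears on the right-hand side of the $\geq$ in the second constraint of (O-P2), so \emph{increasing} it tightens that constraint, whereas in the first constraint increasing it relaxes things. This means one cannot simply argue ``push all slacks to the top.'' The resolution is that the argument only ever needs the inequality $\tkm[n] \leq R_{m,k}[n]$ (replacing $\tkm[n]$ by $R_{m,k}[n]$ can only help the first constraint and only make the right side of the causality constraint larger — but then that larger right side is exactly \eqref{ct:Causality}, which we must verify holds; it does, because the original (O-P2) causality constraint with the smaller $\tkm[n]$ on the right is weaker, so it does \emph{not} immediately give \eqref{ct:Causality}). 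So the clean route in the second direction is instead: keep $\P$ and simply observe that feasibility of (O-P1) requires \eqref{ct:Causality} with $R_{m,k}[n]$ on the right, which does \emph{not} follow from (O-P2) — hence one must argue that at an \emph{optimal} solution of (O-P2) the slacks satisfy $\tkm[n] = R_{m,k}[n]$ whenever $\akm = 1$ along any binding causality path, or more simply reformulate the equivalence as: (O-P2)'s feasible set, \emph{projected} onto $(\mu,\P)$ via $\tkm[n] := R_{m,k}[n]$, recovers (O-P1). I would handle this by proving the two optimal-value inequalities directly as above (the first direction is unconditional; the second uses $\tkm[n]\le R_{m,k}[n]$ together with the fact that the causality constraint in (O-P1) is written with $R_{m,k}[n]$, so I must instead show any (O-P2)-feasible $\P$ with the substituted slacks satisfies it — which it does precisely because substituting the larger value $R_{m,k}[n]$ makes (O-P2)'s causality inequality become \eqref{ct:Causality}), and I would note this mirrors the proof of Lemma~\ref{Lemma1} for (O-T2)/(O-T1) verbatim, so the appendix proof of Lemma~\ref{Lemma1} can be cited for the identical logic.
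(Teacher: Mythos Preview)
Your first direction (any $(\mu,\P)$ feasible for (O-P1) yields a feasible $(\mu,\P,\boldsymbol{T})$ for (O-P2) by setting $\tkm[n]=R_{m,k}[n]$) is correct and matches the paper. The gap is in the reverse direction, and you in fact spot it yourself but then misresolve it. From (O-P2) you only know
\[
\sum_{i\le n-D}\beta_{B,m}[i]R_{B,m}[i]\;\ge\;\sum_{i\le n}\sum_k a_{m,k}[i]\,\tkm[i],
\]
together with $\tkm[i]\le R_{m,k}[i]$. Replacing $\tkm[i]$ by the \emph{larger} $R_{m,k}[i]$ on the right-hand side produces a \emph{stronger} inequality, namely \eqref{ct:Causality}; it is not implied by what you have. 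Your last parenthetical (``which it does precisely because substituting the larger value $R_{m,k}[n]$ makes (O-P2)'s causality inequality become \eqref{ct:Causality}'') confuses ``becomes the same expression as'' with ``is implied by,'' and so does not close the argument. Keeping $\P$ fixed, an (O-P2)-feasible point need not satisfy (O-P1)'s causality constraint.

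The paper's fix is a different maneuver you did not consider: rather than holding $\P$ fixed, it \emph{reduces} $p_m[n]$ until the slack constraint is tight, i.e., until $R_{m,k}[n]=\tkm[n]$ for the active $(m,k)$. This is harmless because $p_m[n]$ does not appear in the objective, does not appear on the left of the causality constraint (only $p_B$ does), and reducing it can only relax the average/peak power constraints. After this reduction, the (O-P2) causality inequality \emph{is} \eqref{ct:Causality} and the min-rate constraint is $\Rk\ge\mu$, so the reduced $(\mu,\P)$ is feasible for (O-P1) with the same objective. That power-reduction step is the missing idea; once you add it, your two-direction argument goes through.
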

\begin{proof}
The proof can be found in Appendix  \ref{App:ProofLema1}.
\end{proof}

Since (O-P2) is jointly convex with respect to $\mu, \P$ and $\boldsymbol{T}$, we can derive analytical expressions for the powers. As both powers are related via the causality constraint, we split the problem into two sub-problems, named: (i) Optimal R-UAVs to GUs power with fixed MBS to R-UAVs power  and (ii) Optimal MBS to R-UAVs power with fixed R-UAVs to GUs power. Note that directly solving (O-P2) via  solvers, as done in the literature, results in sending more power than what is needed from the MBS to meet the causality constraint. As a result, the two hops would be unbalanced and the sum rate of the first hop would be much higher than that of the second hop. Therefore, our  solution for such a problem  provides a more efficient use of power resources.  

\subsubsection{Optimal R-UAVs to GUs Power Allocation}
The first sub-problem we aim to solve relates the power from the R-UAVs to the GUs, named (O-P2.1), which can be formulated as:

\begin{equation}\label{opt:Power2}
    \begin{aligned}
    & \underset{ \scriptsize{ \pm{}, \tkm{}[n]}  }{\text{max}}
    & &  \mu \\
    & \text{s.t.} & &  \sum \limits_{n = D+1}^N \sum \limits_{m = 1}^M \akm \tkm{}[n] \geq \mu \ls \ls \forall k  \\
    & & &\sum \limits_{i = 1}^{n-D} \beta_{B,m}[i] R_{B,m}[i] \geq \sum \limits_{i = D+1}^{n} \sum \limits_{k = 1}^{K} a_{m,k}[i] \tkm{}[i] \mspace{20mu} n = D+1,\dots,N \ls , \ls \forall m\\
    &  & &  \tkm{}[n] \leq \log_2 ( 1 + \pm{} \Gamma_{m,k}^{''}[n] ) \mspace{20mu}  n = D+1,\dots,N \ls , \ls \forall k,m \\
    & & & \eqref{ct:powerAvgR}, \eqref{ct:powerR}
    \end{aligned}
\end{equation}
After applying the Lagrangian method, the optimal solution to  (O-P2.1) is given by:
\begin{align}\label{eq:optimalpower2}
p_{m}^*[n] = \left\{ \begin{array}{cc} 
                0 & \hspace{5mm} n=1,\dots,D \\
                \Big[  \zeta - \frac{1}{\Gamma_{m,k}^{''}[n]}  \Big]^+ & \hspace{5mm} n = D+1,\dots,N, \\
                \end{array} \right.
\end{align}
where $\zeta$  depends on the Lagrangian multipliers of the problem and the operator $[ x ]^+ = \max(x , 0) $. 

\begin{proof}
The proof can be found in Appendix \ref{App:P1}.
\end{proof}

\subsubsection{Optimal MBS to R-UAVs Power Allocation}
Next, we aim to find the minimum MBS transmit power that satisfies all constraints. Therefore, the next problem, named (O-P2.2), is defined as:
\begin{equation}\label{opt:Power1}
    \begin{aligned}
    & \underset{ \pb{}  }{\text{min}}
    & &   
        \sum \limits_{n = 1}^{N-D} \pb{}  \\
    & \text{s.t.} & & \sum \limits_{i = 1}^{n-D} \beta_{B,m}[i] R_{B,m}[i] \geq \sum \limits_{i = D+1}^{n} \sum \limits_{k = 1}^{K} a_{m,k}[i] R_{m,k}[i], \mspace{30mu} n = D+1,\dots,N \ls , \ls \forall m\\
    & & & \eqref{ct:powerAvgBS}, \eqref{ct:powerBS}
    \end{aligned}
\end{equation}
Similar to (O-P2.1), using the Lagrangian method, the optimal power from the MBS to the $m$-th R-UAV is given by:

\begin{align}\label{eq:optimalpower1}
p_{B,m}^*[n] = \left\{ \begin{array}{cc} 
                \Big[ \xi - \frac{1}{\Gamma_{B,m}^{''}[n]}  \Big]^+ & \hspace{5mm} n=1,\dots,N-D \\
                0 & \hspace{5mm} n = N-D + 1,\dots,N \\
                \end{array} \right..
\end{align}

\begin{proof}
The proof can be found in Appendix \ref{App:P2}.
\end{proof}

Both sub-problems provide  water-filling solutions, taking into account the equivalent channels given by $\Gamma_{m,k}^{''}[n]$ and $\Gamma_{B,m}^{''}[n]$, respectively. Thus, in essence, at each time slot, the MBS and R-UAVs allocate power based on the inverse of the corresponding channel gains. After obtaining the solution to both sub-problems, we iterate until convergence, as summarized in  Algorithm 2.

\begin{figure}[!t]
 \removelatexerror
  \begin{algorithm}[H]
   \caption{Optimal Power Allocation with fixed TDMA scheduling, UAV trajectories, and beamwidths. }
   Set inner iteration number, $i = 1$. \\
   \While { $\frac{ | \mu^i - \mu^{i+1} | }{\mu^i} > \epsilon$ or $i < I_{max,p}$ }
   {
      (i) Given $p_B^{i}[n]$,  solve (O-P2.1)  $\xrightarrow{} \ls p_m^{i+1}[n]$ \\
      (ii) Given $p_m^{i+1}[n]$,  solve (O-P2.2)  $\xrightarrow{} \ls p_B^{i+1}[n]$ \\
      $i \xleftarrow{} i+ 1$
   }
  \end{algorithm}
  \label{alg:P1}
\end{figure}

\subsection{Convergence}
Based on the solutions to the previous sub-problems, we propose an iterative method for the initial non-convex problem in which we optimize four sets of variables: TDMA scheduling,   UAV trajectories, directivity degrees, and power allocation. Since the four sub-problems are convex optimization problems, they can be solved using existing polynomial-time algorithms.  In addition, in Algorithm 3, we iterate between power optimization and Algorithm 1 because including the power optimization into Algorithm 1 generally converges to lower values of the cost function.
The convergence of the proposed BCD algorithm in Algorithm 3 is guaranteed by the following proposition:

\begin{prop}
The sequence of objective values generated by the proposed BCD approach in Algorithm 3 is monotonically non-decreasing with an upper bound, and therefore converges.
\end{prop}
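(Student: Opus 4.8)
The plan is to establish the two ingredients of the monotone convergence theorem: (i) the sequence of objective values $\mu$ produced across successive outer iterations of Algorithm~3 is monotonically non-decreasing, and (ii) it is bounded above. Combining these two facts immediately yields convergence of the sequence, which is the claim.

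First I would argue monotonicity. Let $\mu^{(t)}$ denote the objective value at the start of outer iteration $t$, i.e., the value achieved by the current iterate $(\A^{(t)}, \Q^{(t)}, \R^{(t)}, \P^{(t)})$. Each outer iteration consists of running Algorithm~1 (which itself cycles through solving (O-X1), (O-T3), (O-R2)) followed by Algorithm~2 (which cycles through (O-P2.1) and (O-P2.2)). I would treat each of these block updates in turn. For the TDMA step (O-X1), the previous iterate $\A^{(t)}$ is feasible for (O-X1) with all other variables held fixed, so the optimal value of (O-X1) is at least $\mu^{(t)}$; the same \emph{warm-start feasibility} argument applies to the beamwidth step (O-R2), since $\R^{(t)}$ (with $\epsilon$-trust-region centered at $r^p = \R^{(t)}$) is feasible and the Taylor approximations are exact at the expansion point. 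For the trajectory step (O-T3) one must be slightly more careful: (O-T3) is a convex \emph{lower-bounding surrogate} of (O-T2), obtained by replacing each rate $R_{B,m}[n]$, $R_{m,k}[n]$ and the collision term by its first-order Taylor expansion at the current point $\Q^{(t)}$; since these lower bounds are tight at $\Q^{(t)}$ (equality at the expansion point) and satisfied with the correct inequality direction everywhere else, the current iterate $\Q^{(t)}$ together with the induced slacks $\dkm[n]$ is feasible for (O-T3) and achieves $\mu^{(t)}$, hence the optimum of (O-T3) is $\geq \mu^{(t)}$, and this optimum is in turn a feasible (hence achievable) objective value for the original (O-T1) by the tightness of the surrogate — this is exactly the standard SCP monotonicity argument, and it is the step I expect to require the most care. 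For the power step, Lemma~\ref{Lemma2} tells us (O-P2) is equivalent to (O-P1), and the alternating solution of (O-P2.1)/(O-P2.2) in Algorithm~2 is itself a BCD on a jointly convex problem, so each of its inner steps is non-decreasing and the previous power allocation is always feasible; thus the power block also cannot decrease $\mu$. Chaining these, $\mu^{(t+1)} \geq \mu^{(t)}$ for all $t$.

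Next I would establish the upper bound. The minimum user rate $\mu = \min_k \Rk$ is bounded above because every individual instantaneous rate $R_{m,k}[n]$ in \eqref{eq:rateRU} is bounded: the transmit powers obey the peak constraints $\pm{} \leq P_{R,max}$, $\pb{} \leq P_{B,max}$ from \eqref{ct:powerR} and \eqref{ct:powerBS}; the directivity factor $(r_m[n]+1)$ is bounded by $r_{max}+1$ via \eqref{eq:degreeval}; the height $\hr{}$ is fixed; and the remaining channel terms are at most $\rhoo{}\hr{}^{r_m[n]}/(\sigma^2 \hr{}^{\kappa + r_m[n]})$ since $\|\qn - \wk\|^2 \geq 0$, which is a finite constant. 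Hence $R_k[n] \leq \log_2(1 + C)$ for a universal constant $C$ depending only on the fixed system parameters, so $\Rk$ and therefore $\mu$ are uniformly bounded. (Alternatively, one can simply note that $\mu$ is bounded above by the optimal value of the relaxed original problem, which is finite because its feasible set is compact and the objective continuous.)

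Finally, I would invoke the monotone convergence theorem: a monotonically non-decreasing real sequence that is bounded above converges to a finite limit. This proves the proposition. The one subtlety worth flagging explicitly in the write-up is that monotonicity of the \emph{surrogate} optima (O-T3) and (O-R2) must be translated back into monotonicity of the \emph{true} objective of the original problem — this follows because the surrogates are global lower bounds that are tight at the current iterate, a fact already embedded in the construction of \eqref{eq:Bound1}, \eqref{eq:Bound2}, and \eqref{ct:collision2}; I would state this as a one-line lemma or remark rather than re-deriving it.
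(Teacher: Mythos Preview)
Your proposal is correct and follows essentially the same approach as the paper's own proof: block-by-block monotonicity of the inner updates (with the SCP/SLP surrogates tight at the current iterate), combined with a uniform upper bound on $\mu$ obtained by plugging $P_{R,\max}$, $r_{\max}$, and $\|\qn-\wk\|=0$ into \eqref{eq:rateRU}, then invoking monotone convergence. Your write-up is in fact slightly more explicit than the paper's about the translation from surrogate optima back to true-objective values; one small caution is that for the SLP step (O-R2) the linearization is not globally a lower bound (unlike the convex-SCP bounds in \eqref{eq:Bound1}--\eqref{eq:Bound2}), so the ``global lower bound'' phrasing should be reserved for (O-T3) and the (O-R2) argument rested on tightness at the expansion point together with the $\epsilon$-trust region --- the paper does not address this distinction either.
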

\begin{proof}
 The proof can be found in Appendix \ref{App:ProofProp1}.
\end{proof}

\begin{figure}[!t]
 \removelatexerror
  \begin{algorithm}[H]
   \caption{Algorithm for the TDMA Scheduling, UAV Trajectories, Directivity Degree, and Power Allocation.}
   Set outer iteration, $k = 1$. \\
   Initialize: $\P^{1}$, $\Q^{1}$ and $\R^{1}$. \\
   \While { $\frac{ | \mu^k - \mu^{k+1} | }{\mu^k} > \epsilon$ or $k < I_{max}$ }
   {
      (i) Fix: $\P^k$ and run Algorithm 1 to obtain $\A^{k+1}, \Q^{k+1}, \ls \R^{k+1} $. \\
      (ii) Fix: $\A^{k+1}, \Q^{k+1}, \ls \R^{k+1} $ and run Algorithm 2 to obtain $\P^{k+1}$.\\
      $k \xleftarrow{} k + 1$
   }
  \end{algorithm}
  \label{alg:Al1}
\end{figure}

\subsection{Algorithm Initialization}
In this subsection, we explain the initialization methods for the UAV trajectories. We distinguish between the MBS and R-UAVs since $\hb{} > \hr$ to avoid any possible collision. 

\paragraph{MBS Initialization}
    As the MBS is assumed to transmit data to all R-UAVs, a natural initialization is to place the MBS in the middle of all GUs, which consequently will be in the middle of the R-UAVs as well. Therefore, we first find the mass center of the GUs,  $\boldsymbol{c}_{K} = \frac{1}{K} \sum_{\forall k} \wk{}$, and  create a circular trajectory around it. To calculate the radius of the circle, we take into account the maximum velocity between two generic consecutive snapshots $n$ and $n+1$. The minimum distance between two points in a uniformly sampled circle is  $2R\sin(\frac{\pi}{N})$, where $R$ is the radius and $N$ is the number of points on the circle. To guarantee that the UAV can fly such a distance, at each step, we need $2R\sin(\frac{\pi}{N}) \leq (V_{B}\delta) $ which results in the maximum radius of $R_{max} = \frac{(V_{B}\delta)}{2} (\sin(\frac{\pi}{N}))^{-1}$. Therefore, the initial trajectory for the MBS is:
        \begin{align*}
            \boldsymbol{q}_{B}^1[n] = \boldsymbol{c}_{K} +  \gamma R_{max} \left[ \cos\left(\frac{2\pi}{N-1}n\right) \ls \ls  \sin\left(\frac{2\pi}{N-1}n\right) \right] \ls \ls \ls \forall n,
        \end{align*}
        where $0 \leq \gamma \leq \gamma_{max} $ ensures the movement-related power constraints are met in the first inner iteration of Algorithm 1, $j = 1$. Note that a circle of radius $R_{max}$ may not satisfy such constraints. The value of $\gamma_{max}$ can be computed from Eq. \eqref{eq:consumption}, assuming constant velocity, and the power of the batteries, $P_{UAV}$.
 
    \paragraph{R-UAVs Initial Trajectory}
    For the R-UAVs, we combine the Circle Packing (CP) technique \cite{CPTech}  with the movement-related power constraints. After finding the mass center, $\boldsymbol{c}_{K}$, we compute the minimum radius circle that contains all GUs by $r_K = \max ||\boldsymbol{c}_{K} - \wk{}|| $. Given $M$ R-UAVs and $r_K$, by applying the CP technique we obtain $M$ centroids, denoted by $\boldsymbol{c}_m$, and its respective radius $r_m^{CP}$. To have, on average, the same number of users inside and outside $r_m^{CP}$, we scale $r_m^{CP}$ by a factor of $0.5$. To take into account the maximum velocity constraint, we calculate $R_{max}$ using the same approach described for the MBS Initialization. Therefore, the radius for the circular trajectory of  R-UAVs is given by $r_m = \min ( R_{max}, \frac{r_m^{CP}}{2}) $. As a consequence, the initial trajectories for the R-UAVs are given by:
        \begin{align*}
            \boldsymbol{q}_{m}^1[n] = \boldsymbol{c}_{m} +  \gamma r_m \left[ \cos\left(\frac{2\pi}{N-1}n\right) \ls \ls  \sin\left(\frac{2\pi}{N-1}n\right) \right] \ls \ls \ls \forall n,
        \end{align*}
    where $\gamma$, again, ensures the trajectory-related power constraint is met.

\section{Simulation Results}\label{Sec:Res}
In what follows, we present simulation results for the proposed multi-UAV relay network. 
In our simulations, we generate $K = 7$ users  randomly in an $800 \times 800 \ls m^2$ area and set $\kappa = 2$, $d_{0,A}^{\kappa} = - 35$ dB, $d_{0,G}^{\kappa} = - 50$ dB, and  $\sigma^2 = - 100$ dBm, as suggested in the literature. In addition,  $P_{B,avg} = P_{R,avg} = 2$mW with a peak power of $P_{B,max} = P_{R,max} = 4$mW and $A = 1$, which are common values used in the literature as well. The altitudes of the MBS and R-UAVs are fixed at  $H_B = 200$m and $H_R = 100$m, respectively, being in concordance with the LoS channel assumption. The maximum UAV velocity is set to $V_{B} = V_{R} =  50$ m/s with a minimum safety distance $d_{min} = 10$m. The minimum and maximum beamwidth degrees for both the MBS and R-UAVs are $r_{min} = 1$ and $r_{max} = 6$, respectively. In addition, unless specified, the processing delay is set to $D = 1$ slot and the on board power is $P_{UAV} = 6.5$ kW.
\par

We first illustrate the convergence of Algorithm 1 and the gains provided by the beamwidth optimization in Fig. \ref{fig:convergence}. To simplify the presentation, we consider $M = 1$ in this case. We evaluate two scenarios, for $T = 50$s and  $T = 60$s. In both cases, we include two curves. The blue curves (solid and dashed) are the results of optimizing $\A$ and $\Q$ with fixed  beamwidths $\boldsymbol{R}$, $r_{B}[n] = r_m[n] = 2$. If we add the beamwidth sub-problem, as in Algorithm 1, we obtain the red curves  for $T = 50$s (dashed) and $T = 60$s (solid). For the case where $T = 50$s, the gain due to the adaptive beamwidths is $ \frac{ \mu^{*}(\boldsymbol{X}, \boldsymbol{Q}, \boldsymbol{R}) }{\mu^{*}(\boldsymbol{X}, \boldsymbol{Q})} \approx 1.88$, while for $T = 60$s, it is even greater, i.e., $ \frac{ \mu^{*}(\boldsymbol{X}, \boldsymbol{Q}, \boldsymbol{R}) }{\mu^{*}(\boldsymbol{X}, \boldsymbol{Q})} \approx 1.98$. The higher $T$, the more R-UAVs can fly near the GUs and therefore use more directive patterns, which yields to an improvement of the end-rate. Finally, running Algorithm 3 for $T = 50$ provides a minimum achieved rate of $0.0320$ bps/Hz, where the combined gain provided by the optimization of $\R$ and $\P$ is $3.18$. Similarly, implementing Algorithm 3 for $T = 60$s provides a combined gain of $3.29$, where the minimum achieved rate is $0.04$ bps/Hz. \par 

In Fig. \ref{fig:convergence2}, we include the evolution of Algorithm 3 for two cases, $[T = 60, \ls M = 1]$ and $[T = 30, \ls M = 2]$ (blue curves). Recall that each iteration, $k$, is composed by two inner iterative algorithms. Before reaching $k = 1$, Algorithm 1 converges after $60$ inner iterations, and then, Algorithm 2 produces a steep increase in the rates after optimizing the powers. A similar pattern is seen before reaching $k = 2$, while the third iteration of Algorithm 3 shows that the algorithm has converged. In addition, we include the evolution of the Jain's Fairness Index (F.I.) \cite{Jain}, defined as: $F.I. = \frac{(\sum \limits_{k = 1}^K \Rk{})^2  }{K \sum \limits_{k = 1}^K \Rk{}^2 }$,
by the red curves in Fig. \ref{fig:convergence2}. A complete fairness, i.e. $F.I. = 1$, is achieved very quickly. While the final Jain's Fairness Index for all simulations in this section is $F.I. = 1$, for the sake of brevity, we will not report it in the rest of the paper.

\begin{figure}
    \minipage{0.42\textwidth}
    \centering
      \includegraphics[scale=0.48]{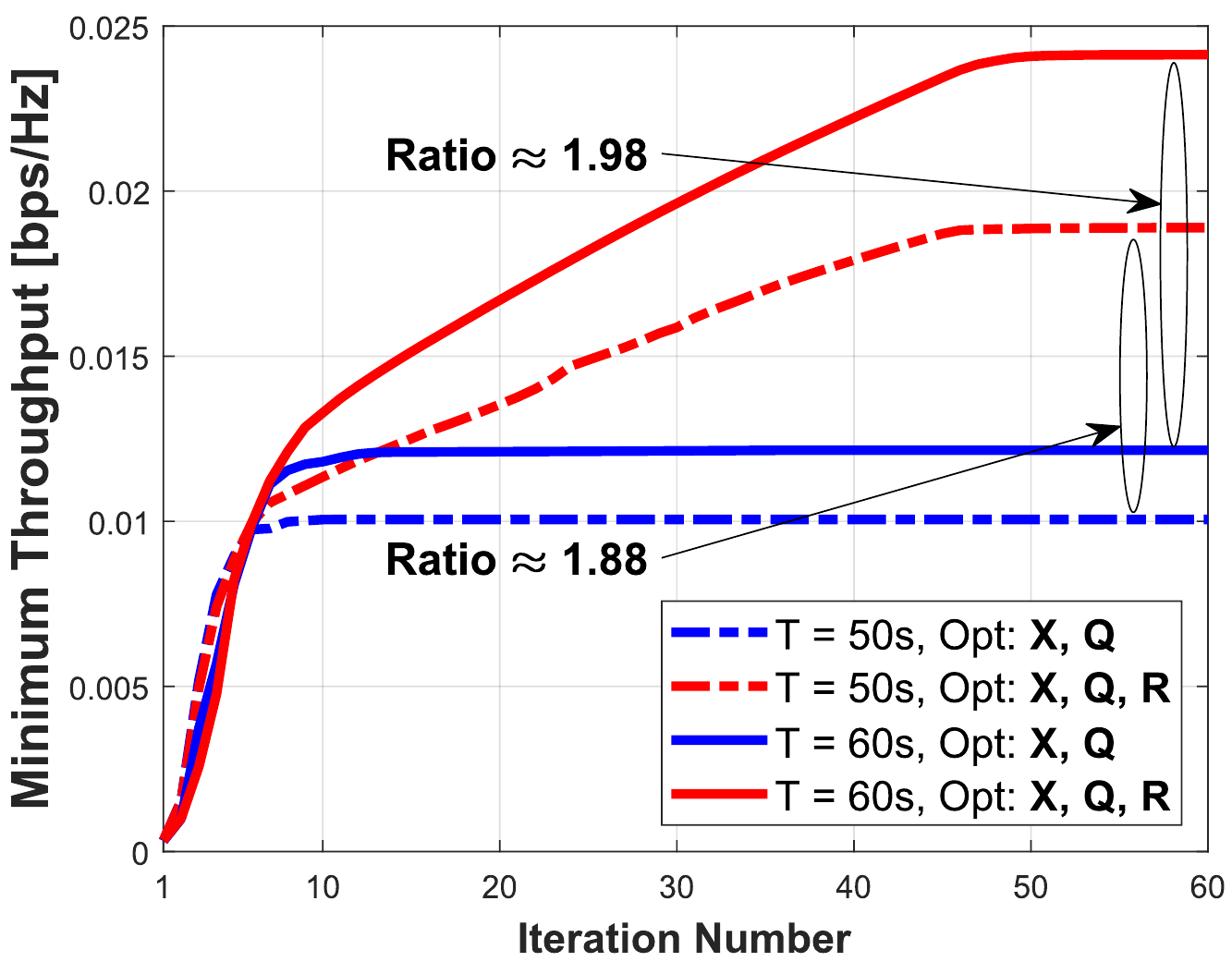}
      \caption{Convergence of Algorithm 1. 
      }\label{fig:convergence}
    \endminipage\hfill
    \minipage{0.46\textwidth}
    \centering
      \includegraphics[scale=0.49]{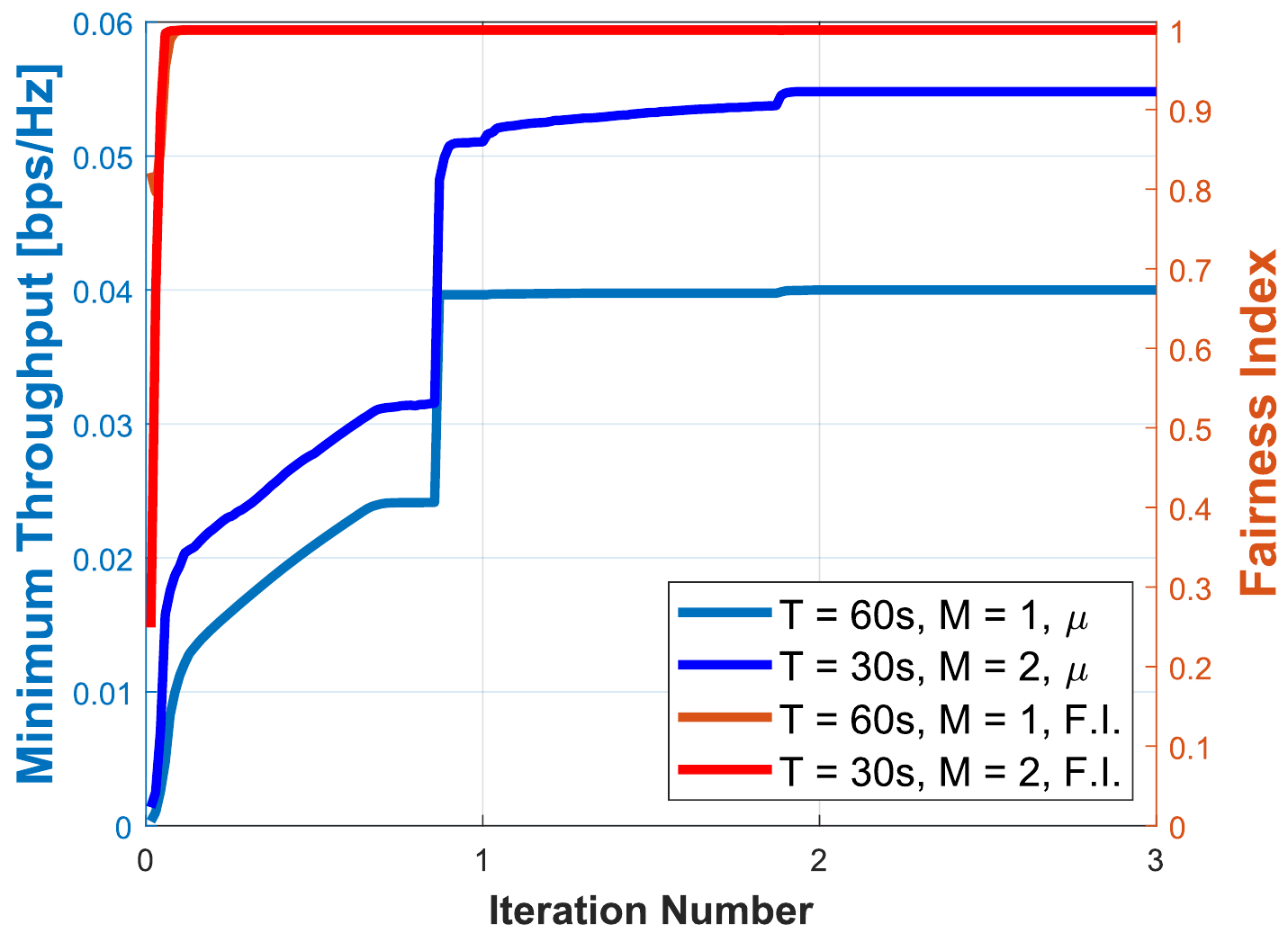}
      \caption{Convergence of Algorithm 3 
      (blue curves) and the evolution of the Jain's Fairness Index (F.I.) (red curves).  }\label{fig:convergence2}
    \endminipage\hfill
\end{figure}

In Fig. \ref{fig:RatevsT}, we present the evolution of the minimum GU rate as a function of the total flying time $T$. To emphasize the gain produced by the optimization of directional antennas, we include the following cases: (i) optimize $\A$, $\Q$ and $\P$ for fixed $\R$, $r_B[n] = r_m[n] = 2$ for $M = 1$ and $M = 2$ (dashed blue and red, respectively) and (ii) optimize $\A$, $\Q$, $\R$ and $\P$ for $M = 1$ and $M = 2$ (solid blue and red, respectively). We also include a static deployment with $M = 2$ relays (solid-black). Clearly, the optimization of the beamwidth degrees provides a huge improvement. The gain is large enough to make it possible to achieve the same minimum throughput with smaller number of relays for large flying times. 
\par
As mentioned in previous sections, we want to study the consequences of different processing delay times at the R-UAVs, given by the variable $D$. In Fig. \ref{fig:delay}, we include the variation of the cost function for different values of $D$. Again, the solid curves correspond to a set-up where we allow adaptive beams, while the dashed curves represent fixed $r_B[n] = r_m[n] = 2$. First, notice how for $M = 1$ and $T = 40$ (solid and dashed red) the impact of the delay is smaller. To the contrary, for $M = 3$ and $T = 30$s (solid and dashed black), the rate greatly decreases as the delay increases. This is because for smaller flying times, UAVs have less slots to relay information as the delay increases. As a result,  for low flying time missions, high delay scenarios do not perform well even if more R-UAVs are used. Nevertheless, the achieved GU rates are improved by adding the optimization of $\R$ into the set-up instead of keeping it fixed at $r_B[n] = r_m[n] = 2$ (solid vs dashed curves).

\begin{figure}
    \minipage{0.45\textwidth}
    \centering
      \includegraphics[width=7cm]{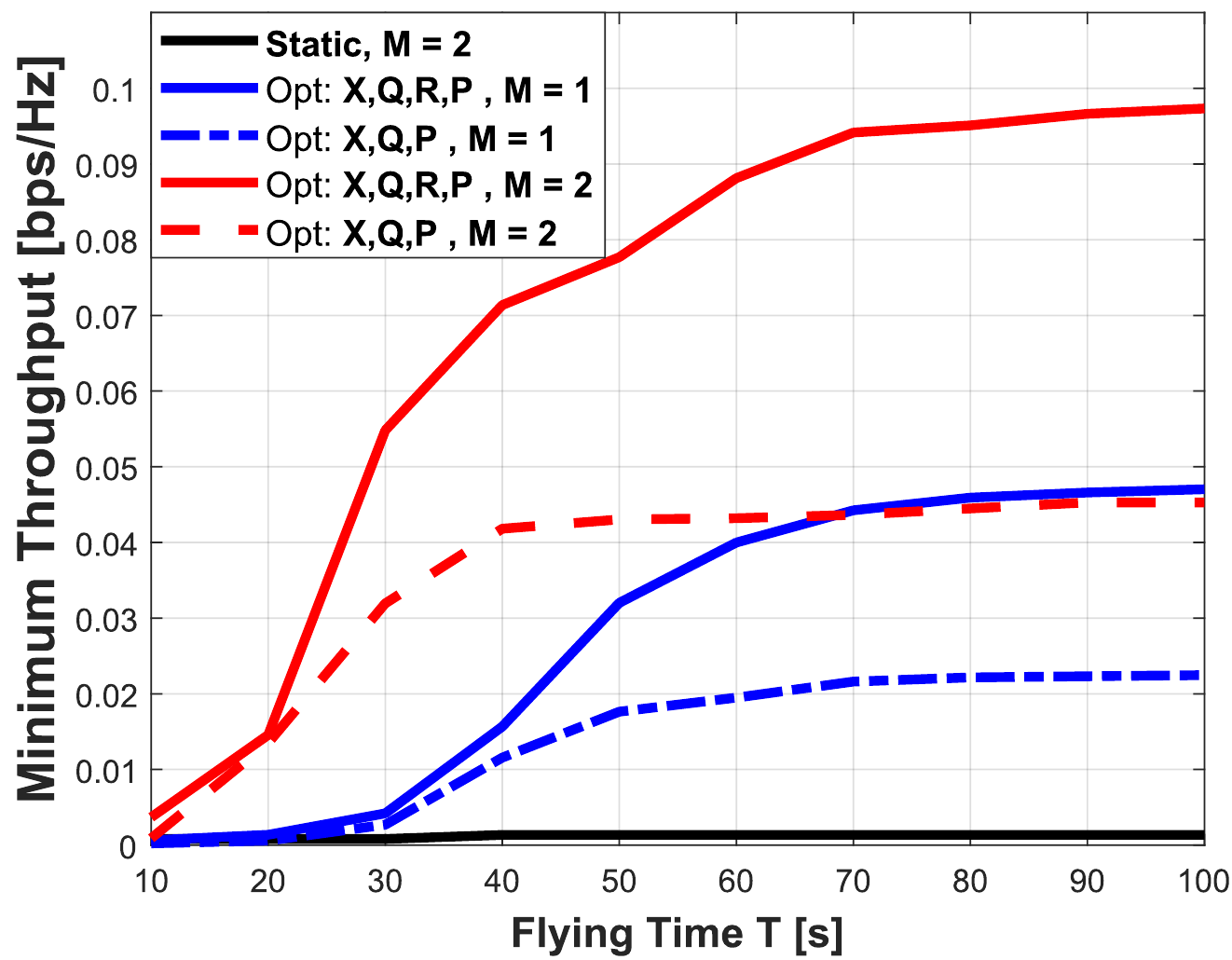}
      \caption{Evolution of the minimum throughput as a function of the flying time.  }\label{fig:RatevsT}
    \endminipage\hfill
    \minipage{0.45\textwidth}
    \centering
      \includegraphics[width=7cm]{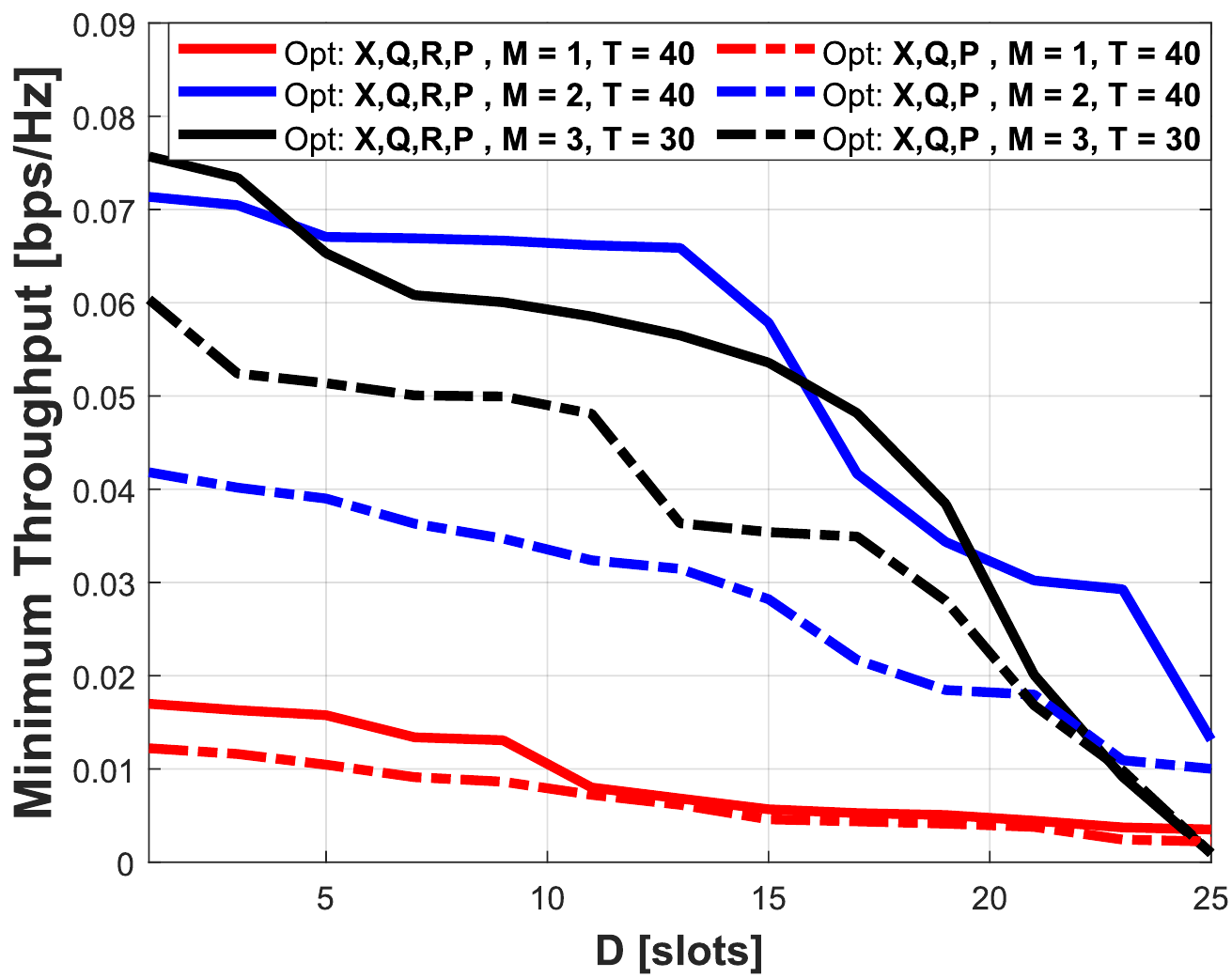}
      \caption{Evolution of the minimum throughput as a function of the delay. }\label{fig:delay}
    \endminipage\hfill
\end{figure}

Fig. \ref{fig:energy} presents the variation of the minimum GU rate as a function of the stored power at the UAV batteries, $P_{UAV}$. Note that we evaluate the same scenarios as in Fig. \ref{fig:delay}. The gap between solid and dashed lines increases as $P_{UAV}$ increases. The reason is mainly due to the fact that increasing $P_{UAV}$ allows R-UAVs to get closer to GUs and therefore exploit the capabilities of having adaptive beamwidths. However, for the cases where the amount of on board power is low, the difference between the the solid and dashed lines decreases, as  UAVs have less  freedom to move. Therefore,  UAVs will tend to use wider beams in such cases, closer to $r_B[n] = r_m[n] = 2$ that represents the dashed curves.

\begin{figure}
    \minipage{0.41\textwidth}
      \includegraphics[width=\linewidth]{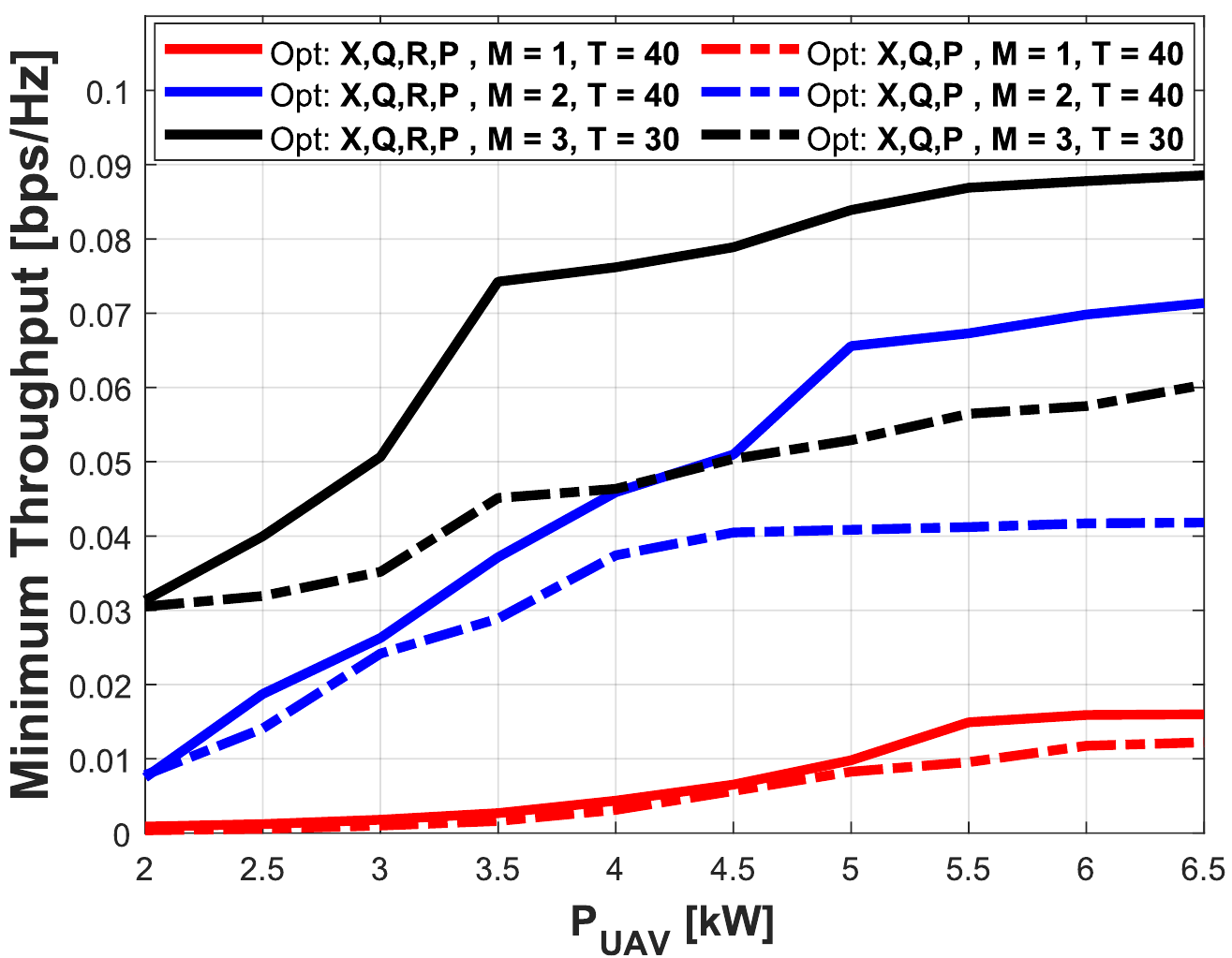}
      \caption{Evolution of the minimum throughput as a function of the on board power, $P_{UAV}$. }\label{fig:energy}
    \endminipage\hfill
    \minipage{0.43\textwidth}
      \includegraphics[width=\linewidth]{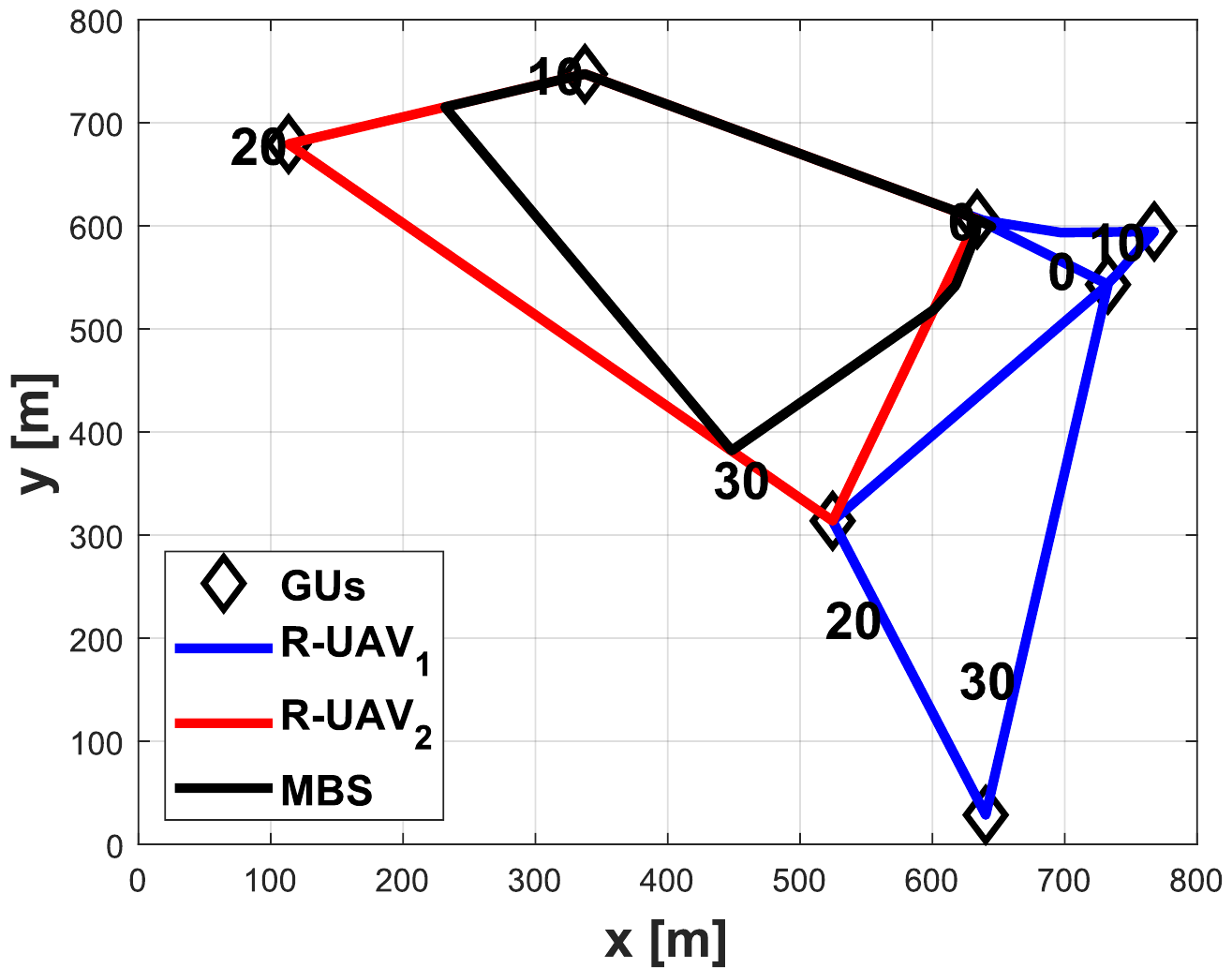}
      \caption{UAV trajectories for $T = 40$s.  }\label{fig:trajectories}
    \endminipage\hfill
\end{figure}

In Fig. \ref{fig:trajectories}, we include UAV trajectories for the flying time $T = 40$s. Both R-UAVs reach all GUs while the MBS stays near both R-UAVs in order to provide them with the data. As a result of trajectory optimization, UAVs avoid an inter-UAV distance smaller than $d_{min}$. 
We present the time index of some trajectory points as a reference to indicate how the R-UAVs coordinately move. In addition, we include the evolution of the MBS transmit power in Fig. \ref{fig:powerbs}. As a result of the optimization, the MBS only transmits at a few time slots, which yields to a more efficient use of the power resources. 
We also include the optimal power allocation (blue) of the two relays in Fig. \ref{fig:trajectories} as a function of the time in Figs. \ref{fig:rp1} and \ref{fig:rp2}. In the same figures, we include the 2D distance from the R-UAVs to the GUs scheduled to receive data at each time (red). Both figures show that the results derived in Section \ref{subsect:power} match with the simulations in the sense that the transmit power depends on the inverse of the channel, being a function of the distance. In fact, when the 2D distance between R-UAV and GU is approximately more than $200$m, they do not transmit power, while for the cases where the R-UAVs fly on top of the GUs, e.g. 2D distance of $0$m, they transmit at a maximum power, $4$mW. 

\begin{figure}
    \minipage{0.285\textwidth}
      \includegraphics[width=\linewidth]{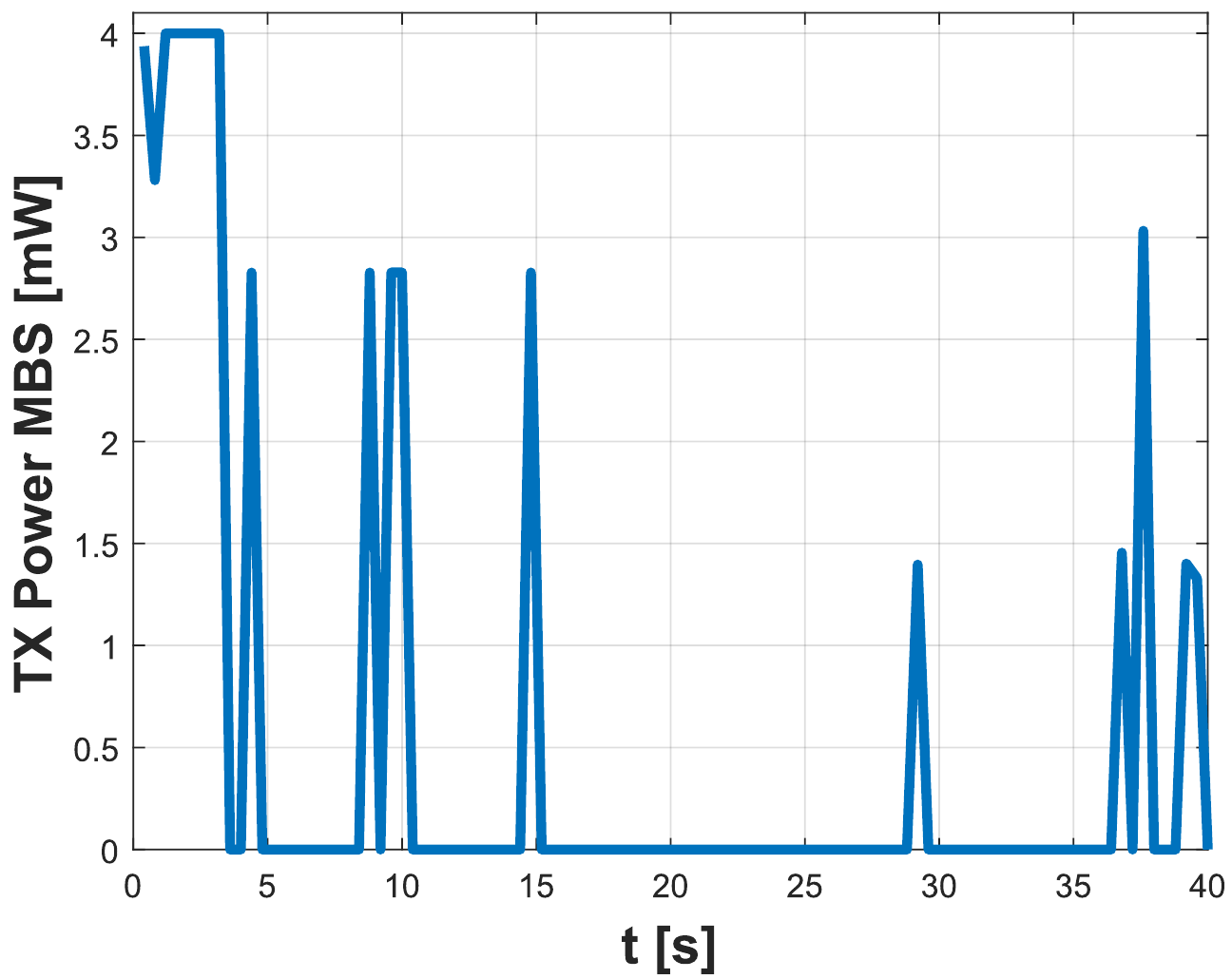}
      \caption{MBS optimal transmit power for $T = 40$s for the scenario presented in Fig. \ref{fig:trajectories}.  }\label{fig:powerbs}
    \endminipage\hfill
    \minipage{0.31\textwidth}
      \includegraphics[width=\linewidth]{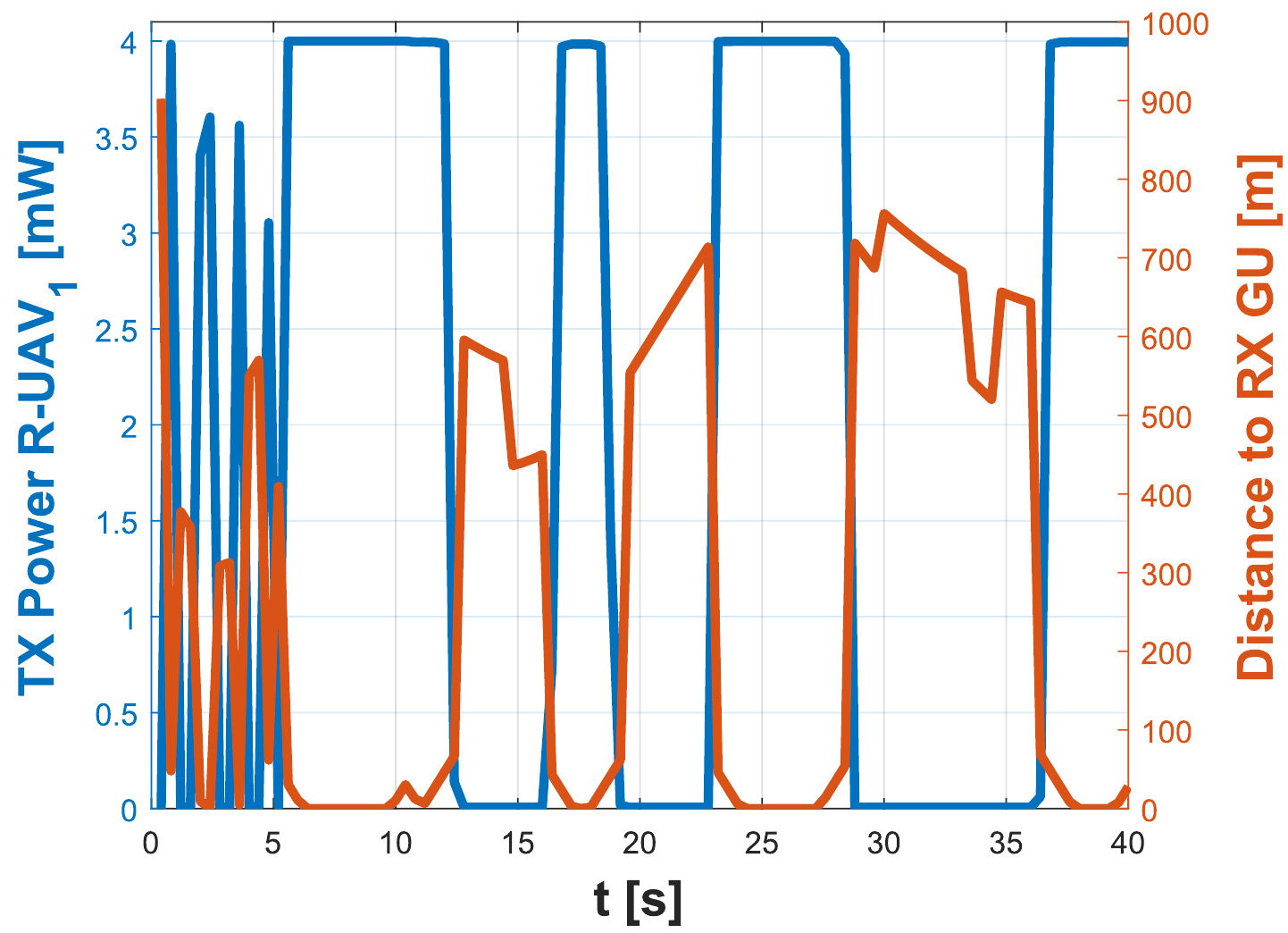}
      \caption{R-UAV$_1$ optimal transmit power (blue) for $T = 40$s  with the  distance to the scheduled GU (red).  }\label{fig:rp1}
    \endminipage\hfill
    \minipage{0.31\textwidth}
      \includegraphics[width=\linewidth]{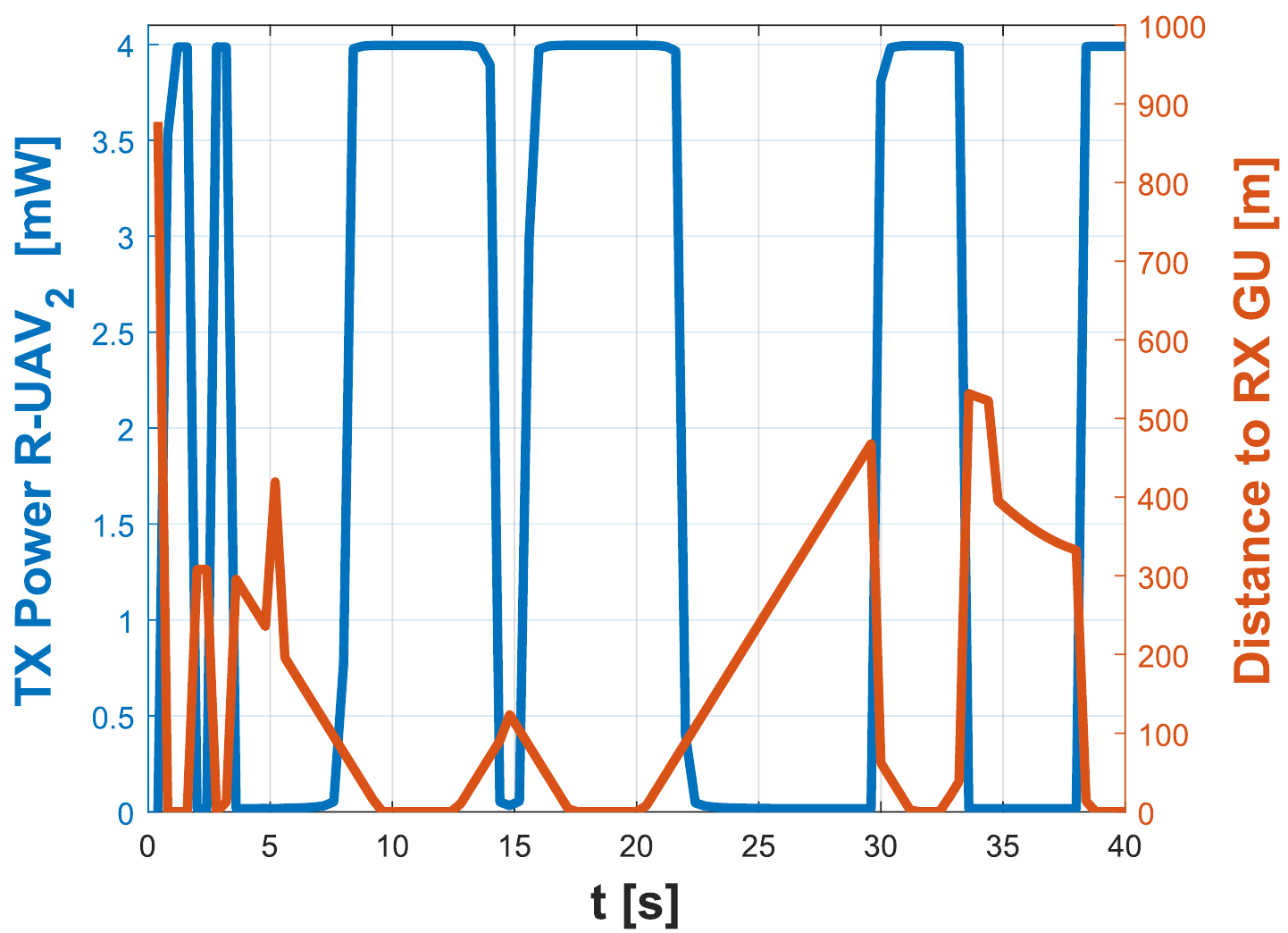}
      \caption{R-UAV$_2$ optimal transmit power (blue) for $T = 40$s with the  distance to the scheduled GU (red).  }\label{fig:rp2}
    \endminipage\hfill
\end{figure}

Finally, for the same scenario presented in Fig. \ref{fig:trajectories}, we include the optimal beamwidth degrees (blue) for the R-UAVs in Figs. \ref{fig:re1} and  \ref{fig:re2}. We also provide the elevation angle between R-UAV and GU scheduled to receive data (red), defined as $\arccos \big(\frac{\hr{}}{\sqrt{ ||\qn{} - \wk{}||^2 + \hr{}^2 }} \big)$. Both figures show that, for low elevation angles, a higher value of $r_m[n]$ is preferred. A low elevation angle means the source flies nearly on top of the receiver, and therefore increases the value of $r_m[n]$ to create a more directive and focused beam. On the contrary, for high elevation angles, e.g. when the source is far from the receiver, a lower value of $r_m[n]$ is preferred. A lower value of $r_m[n]$ creates a wider and less directive beam to cover users at high elevation angles and still provides service. These results are in concordance with the discussion in Section \ref{subsect:beam} where the same conclusions were derived for the special cases that the problem is convex. Similar patterns are observed for the MBS. However, due to lack of space, we do not include them.

\begin{figure}
    \minipage{0.45\textwidth}
      \includegraphics[scale=0.5]{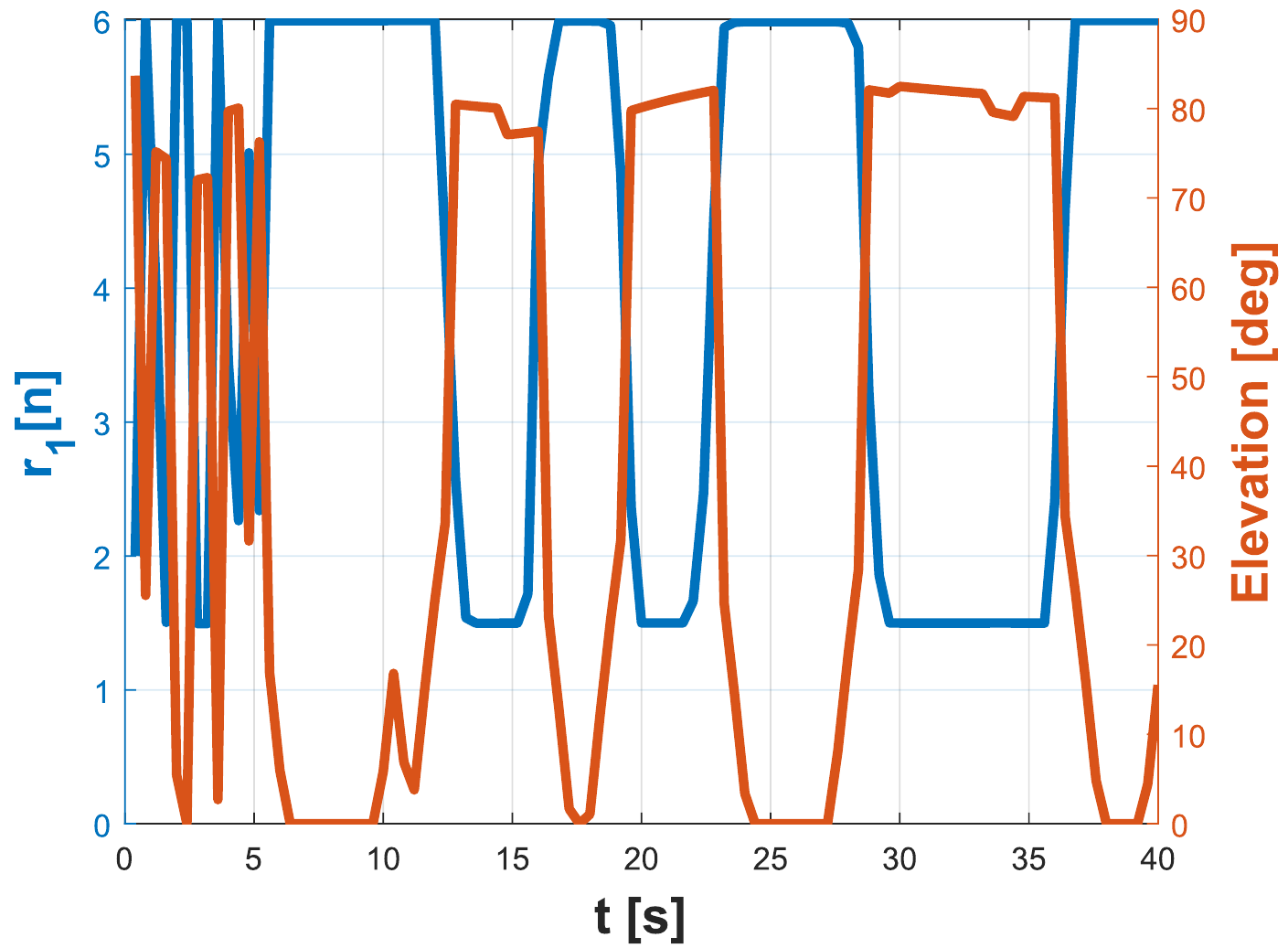}
      \caption{R-UAV$_1$ optimal beamwidths  (blue) for $T = 40$s alongside the  elevation to the scheduled GU  (red). }\label{fig:re1}
    \endminipage\hfill
    \minipage{0.45\textwidth}
      \includegraphics[scale=0.5]{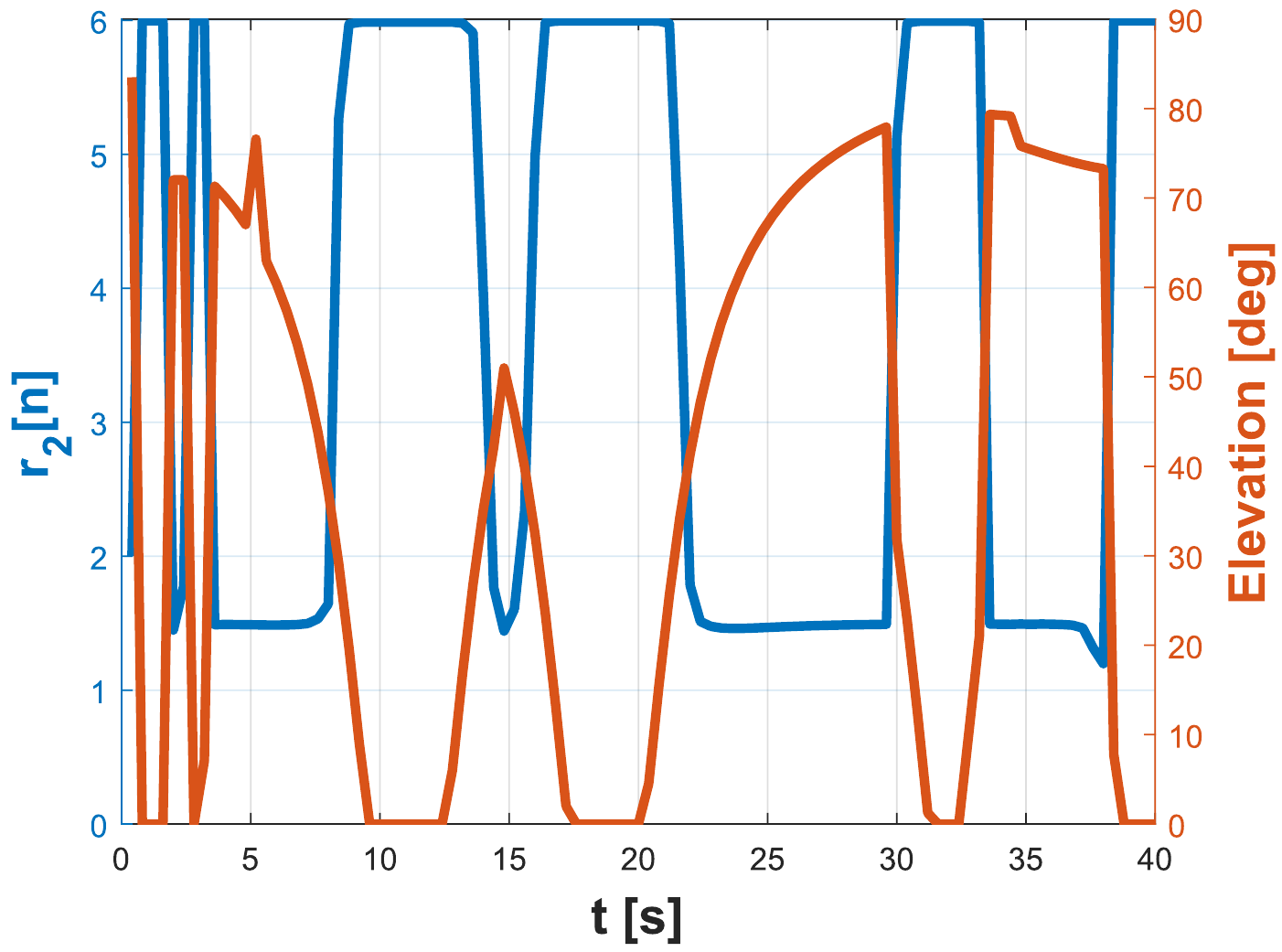}
      \caption{R-UAV$_2$ optimal beamwidths  (blue) for $T = 40$s alongside the  elevation to the scheduled GU (red). }\label{fig:re2}
    \endminipage\hfill
\end{figure}

\section{Conclusion}\label{Sec:Concl}
We studied a down-link two-hop multi-UAV relaying system for the maximize minimum GU throughput problem. We added angle-dependent antenna radiation patterns, producing a more realistic and accurate model. We provided numerical results, showing the intuition behind the beamwidth optimization problem, even if the problem is non-convex. Depending on the elevation between the source and destination, a narrower or wider beam is preferred. In addition, we derived an analytical water-filling type solution for the power allocation problem. We provided numerical results demonstrating the improvement of adding such features in the UAV trajectory problem. For simplicity, we used adaptive beamwidths only for transmission. However, the presented formulation, with minor changes, is applicable to the case with adaptive beamwidths at the receiver side. 
Finally, a very similar formulation works for the up-link  scenario as well as if more hops were used.

\appendices
\section{Proof of Lemmas 1 and 2}\label{App:ProofLema1}
The proof for Lemmas \ref{Lemma1} and \ref{Lemma2} are similar. For the sake of brevity, 
we present the proof for Lemma \ref{Lemma2} and a similar procedure applies to Lemma \ref{Lemma1}.
\begin{proof}
 Assume we have reached the optimal solution to (O-P2). If, for some generic snapshot, $n$, constraint $\tkm{}[n] \leq \log_2 ( 1 + \pm{} \snrr{}[n] )$ is not satisfied with equality, we can reduce the corresponding power $\pm{}$ to satisfy it with equality without a decrease in the objective function, or violating the average and peak power constraints. Therefore, it always exists an optimal solution to (O-P2) in which $\tkm{}[n] \leq \log_2 ( 1 + \pm{} \snrr{}[n] )$ is  satisfied with equality. As a consequence, (O-P2) is  equivalent  to the original  problem, (O-P1).
\end{proof}

\section{Optimal R-UAVs to GUs Power Allocation}\label{App:P1}
For the sake of simplicity, to solve (O-P2.1), we first formulate the partial Lagrangian by taking into account the average rate and causality constraints. To this end, the partial Lagrangian of (O-P2.1)  is
\begin{multline}\label{eq:Lagrangian}
        \mathcal{L}( \mu, \pm{}, \tkm{}[n], \boldsymbol{\lambda}^{'}) = \mu \ls  +  \sum \limits_{k = 1}^K \lambda_k \bigg(  \sum \limits_{n = D+1}^N \sum \limits_{m = 1}^M \akm \tkm{}[n] - \mu \bigg) + \\ 
        \sum \limits_{n = D+1}^N \sum \limits_{m = 1}^M \lambda_{n,m} \bigg(  \sum \limits_{i = 1}^{n-D} \beta_{B,m}[i] R_{B,m}[i] -  \sum \limits_{i = D+1}^{n} \sum \limits_{k = 1}^{K} a_{m,k}[i] \tkm{}[i]   \bigg) ,
\end{multline}
\noindent where $\boldsymbol{\lambda}^{'} = \{ \lambda_k \ls \ls \forall k  \mspace{8mu} , \mspace{8mu} \lambda_{n,m} \ls \ls n = D+1,\dots,N \ls \ls \forall m \mspace{8mu} , \mspace{8mu}  \}$ represents the vector of partial Lagrangian multipliers. Then, we define the following variables:

\noindent\begin{minipage}{0.3\textwidth}
\begin{eqnarray}\label{eq:xLag}
    x = 1 - \sum \limits_{k = 1}^K \lambda_k
\end{eqnarray}
\end{minipage}%
\begin{minipage}{0.05\textwidth}\centering
    
\end{minipage}%
\begin{minipage}{0.65\textwidth}
\begin{eqnarray}\label{eq:wLag}
    \wkm{} = \lambda_k - \sum \limits_{i = n}^N \lambda_{i,m}  \ls \ls \ls \ls \ls n = D+1,\dots,N \ls \ls \forall m,k
\end{eqnarray}
\end{minipage}
\begin{eqnarray}\label{eq:zLag}
    \zmn{} = \sum \limits_{i = n+D}^N \lambda_{i,m}\ls \ls \ls \forall m, \ls \ls n = 1,\dots,N-D,
\end{eqnarray}
to re-write \eqref{eq:Lagrangian} as
\begin{multline*}
        \mathcal{L}( \mu, \pm{}, \tkm{}[n], \boldsymbol{\lambda}^{'}) = x \mu + \sum \limits_{k = 1}^K \sum \limits_{m = 1}^M \sum \limits_{n = D+1}^N  \wkm{} \akm \tkm{}[n] + \sum \limits_{m = 1}^M
        \sum \limits_{n = 1}^{N-D} \zmn{}  \beta_{B,m}[n] R_{B,m}[n] .
\end{multline*}

The goal is to maximize $\mathcal{L}( \mu, \pm{}, \tkm{}[n], \boldsymbol{\lambda}^{'})$ with the addition of the remaining constraints. To this end, the complete Lagrangian is given by
\begin{multline}\label{eq:Lagrangian3}
        \mathcal{L}( \mu, \pm{}, \tkm{}[n], \boldsymbol{\lambda}) = \mathcal{L}( \mu, \pm{}, \tkm{}[n], \boldsymbol{\lambda}^{'}) +  \sum \limits_{m = 1}^M \lambda_m^{'}( P_{R,avg} - \frac{1}{N}\sum \limits_{n = 1}^N \pm{}) + \\ \sum \limits_{n = D+1}^N \sum \limits_{m = 1}^M \sum \limits_{k = 1}^K \lambda_{n,m,k}^{'}\big( \log_2 ( 1 + \pm{} \Gamma_{m,k}^{''}[n] ) -  \tkm{}[n] \big) \sum \limits_{m = 1}^M \sum \limits_{n = 1}^N \lambda_{m,n}^{'}( P_{R,max} -  \pm{}).
\end{multline}
Taking the derivative with respect to the optimization variables for fixed values of the multipliers and recalling the binary nature of $\akm$,  we obtain
\begin{eqnarray}\label{eq:opttmk}
    \frac{ d \mathcal{L}( \mu, \pm{}, \tkm{}[n], \boldsymbol{\lambda}) }{d \tkm{}[n]} = 0 \xrightarrow[]{} \lambda_{m,n,k}^{'} = \wkm{} \akm
\end{eqnarray}
\begin{multline}\label{eq:optimalpower2}
    \frac{ d \mathcal{L}( \mu, \pm{}, \tkm{}[n], \boldsymbol{\lambda}) }{d \pm{}} = \sum_{k = 1}^K \frac{1}{\ln (2)}\frac{\lambda_{m,n,k}^{'} \Gamma_{m,k}^{''}[n]}{1 + \pm{\Gamma_{m,k}^{''}[n]}} - \frac{\lambda_m^{'}}{N} - \lambda_{m,n}^{'}  = 0  \\ \xrightarrow[]{} p_{m}^*[n] = \Big[ \frac{ \wkm{}  }{ (\lambda_{m,n}^{'} + \frac{\lambda_m^{'}}{N} ) \ln(2)} - \frac{1}{\Gamma_{m,k}^{''}[n]}  \Big]^+,
\end{multline}
\noindent where in the intermediate step of Eq. \eqref{eq:optimalpower2}, we have applied the solution obtained in \eqref{eq:opttmk} and therefore have only kept the term for which $\akm \neq 0$.

To obtain the Lagrangian multipliers, we aim to solve the dual problem. Actually, in order to have bounded solutions for  \eqref{eq:optimalpower2}, we need an extra constraint on $\boldsymbol{\lambda}$. Note that  in \eqref{eq:Lagrangian3}, if $\exists \ls \wkm{} < 0$,  its respective optimal $t_{m,k}^*[n] \xrightarrow{} - \infty $, making the problem unbounded. Therefore, $\wkm{} \geq 0$, which implies  $\lambda_k - \sum \limits_{i = n}^N \lambda_{n,m} \geq 0   \ls \ls n = D+1,\dots,N \ls \ls \forall m,k$. Hence, the Lagrangian multipliers are derived by solving the following convex optimization problem, named (O-P2.1D):
\begin{equation*}
    \begin{aligned}
    & \underset{\boldsymbol{\lambda}}{\text{min}}
    & & \mathcal{L}( \mu^*, p_{m}^{*}[n], t_{m,k}^{*}[n], \boldsymbol{\lambda}) \\
    & \text{s.t.} & &  \boldsymbol{\lambda} \geq \boldsymbol{0} \\
    & & & \lambda_k - \sum \limits_{i = n}^N \lambda_{i,m} \geq 0  \ls  \ls \ls \ls n = D+1,\dots,N \ls  , \ls \ls \forall m,k
    \end{aligned}
\end{equation*}

\noindent which can be efficiently solved by gradient methods. These two problems, (O-P2.1) and (O-P2.1D), are solved iteratively  until convergence. 

\section{Optimal MBS to R-UAVs Power Allocation}\label{App:P2}
To obtain the optimal power allocation from the MBS towards R-UAVs that minimizes the sum power at the MBS, we proceed in the same manner as in Appendix \ref{App:P1}. First, we form the complete Lagrangian and, afterwards, we find the optimal $\pb{}$ by making its derivative equal to zero. Therefore, the Lagrangian function of (O-P2.2) is given by
\begin{multline}\label{eq:Lagrangian2}
        \mathcal{L}( \mu, \pb{}, \boldsymbol{\lambda}) = \sum \limits_{n = 1}^{N-D}\pb{} -  \sum \limits_{m = 1}^M
        \sum \limits_{n = 1}^{N-D} \zmn{}  \beta_{B,m}[n] R_{B,m}[n] - \sum \limits_{m = 1}^M \lambda_m^{'}( P_{B,avg} - \frac{1}{N}\sum \limits_{n = 1}^N \pb{})- \\ \sum \limits_{m = 1}^M \sum \limits_{n = 1}^N \lambda_{m,n}^{'}( P_{B,max} -  \pb{}).
\end{multline}
Making the derivative equal to zero, we obtain
\begin{multline}\label{eq:optimalpower3}
    \frac{ d \mathcal{L}(  \pb{}, \boldsymbol{\lambda}) }{d \pb{}} = 1 - \sum_{m = 1}^M \frac{1}{\ln (2)}\frac{\zmn{} \bbm{} \Gamma_{B,m}^{''}[n]}{1 + \pm{}\Gamma_{B,m}^{''}[n]} + \frac{\lambda_m^{'}}{N} + \lambda_{m,n}^{'}  = 0  \\ \xrightarrow[]{} p_{B}^*[n] = \Big[ \frac{ \zmn{}  }{ (1 +\lambda_{m,n}^{'} + \frac{\lambda_m^{'}}{N} ) \ln(2)} - \frac{1}{\Gamma_{B,m}^{''}[n]}  \Big]^+,
\end{multline}
\noindent where in the intermediate step of Eq. \eqref{eq:optimalpower3}, we have used the fact that $\bbm{}$ is non-zero only for one R-UAV, 
and $\zmn{}$ is defined in the same manner as in \eqref{eq:zLag}. Finally, to obtain the Lagrangian multipliers, we use a gradient method to solve the dual problem, (O-P2.2D):
\begin{equation*}
    \begin{aligned}
    & \underset{\boldsymbol{\lambda}}{\text{max}}
    & & \mathcal{L}( p_{B}^*[n] , \boldsymbol{\lambda}) \\
    & \text{s.t.} & &  \boldsymbol{\lambda} \geq \boldsymbol{0}
    \end{aligned}
\end{equation*}
Again, we iterate between (O-P2.2) and (O-P2.2D) until convergence.

\section{Proof of Proposition 1}\label{App:ProofProp1}
\begin{proof}
We distinguish between the outer Iteration $k$ in Algorithm 3, and the inner Iterations $j$ and $i$ of Algorithms 1 and 2, respectively. At Iteration $k$, for given powers $\P^k$, Algorithm 1 is composed of three convex problems whose global solutions can be attained. Therefore, iterating on the inner variable $j$ provides the following inequalities: (i) $\mu(\A^{k,j},\Q^{k,j},\R^{k,j},\P^k) \leq \mu(\A^{k,j+1},\Q^{k,j},\R^{k,j},\P^k)$ by solving the LP problem (O-X1),  (ii) applying the SCP technique to the UAV trajectory sub-problem (O-T2) provides the following:  $\mu(\A^{k,j+1},\Q^{k,j},\R^{k,j},\P^k) \leq \mu(\A^{k,j+1},\Q^{k,j+1},\R^{k,j},\P^k)$ and  (iii)  optimizing the directivity degrees, i.e.,  the convex optimization problem (O-R2), results in: $\mu(\A^{k,j+1},\Q^{k,j+1},\R^{k,j},\P^k) \leq \mu(\A^{k,j+1},\Q^{k,j+1},\R^{k,j+1},\P^k)$. As a result Algorithm 1 provides a non-decreasing sequence: $\mu^{k,1} \leq \mu^{k,2} \leq   \dots \leq \mu^{k,j} \dots \mu^{k,*} $. Applying the same procedure for Algorithm 2, given the inner iteration $i$ and fixing $\A^k$, $\Q^k$ and $\R^k$, the following inequality holds: $\mu(\A^{k},\Q^{k},\R^{k},\P^{k,i}) \leq \mu(\A^{k},\Q^{k},\R^{k},\P^{k,i+1})$ since (O-P2.1) and (O-P2.2) are convex problems. As a result, Algorithms 1 and 2 provide non-decreasing values on the objective function when iterating in their respective inner variables. Therefore, the sequence of achieved GU rates, is a non-decreasing sequence in the outer variable as well: $\mu^{1} \leq \mu^{2} \leq \dots \leq \mu^k \leq \dots \leq \mu^{*}  $, where for simplicity $ \mu^k = \mu(\A^{k},\P^k,\Q^{k},\R^k)$ is the objective function given the solution at Iteration $k$ after convergence in $j$ and $i$. Since the minimum GU rates are upper-bounded by $ \mu_{max} =  \log_2 \big( 1 + \frac{P_{R,max} \rhoo{} (r_{max} + 1) }{ \sigma^2 \hr^\kappa } \big)$, Algorithm 3 will converge, as well as Algorithms 1 and 2.
\end{proof}

\ifCLASSOPTIONcaptionsoff
  \newpage
\fi

\bibliography{Main}
\bibliographystyle{ieeetr}



\end{document}